\theoremstyle{definition}
\newtheorem{definition}{Definition}[section]
\theoremstyle{theorem}
\newtheorem{theorem}{Theorem}[section]
\theoremstyle{proposition}
\newtheorem{proposition}{Proposition}[section]
\title{\textbf{Moduli space of logarithmic states in critical massive gravities}}
\author{\textbf{Yannick Mvondo-She} \\ Department of Physics, University of Pretoria\\
Private Bag X20, Hatfield 0028, South Africa \\ \texttt{vondosh7@gmail.com}}
\date{}
\begin{document}

\maketitle

\begin{abstract}
We take new algebraic and geometric perspectives on the combinatorial results recently obtained on the partition functions of critical massive gravities conjectured to be dual to Logarithmic CFTs throught the AdS$_3$/LCFT$_2$ correspondence. We show that the partition functions of logarithmic states can be expressed in terms of Schur polynomials. Subsequently, we show that the moduli space of the logarithmic states is the symmetric product $S^n\left( \mathbb{C}^2 \right)$. As the quotient of an affine space by the symmetric group, this orbifold space is shown to be described by Hilbert series that have palindromic numerators. The palindromic properties of the Hilbert series indicate that the orbifolds are Calabi-Yau, and allow for a new interpretation of the logarithmic state spaces in critical massive gravities as Calabi-Yau singular spaces.
\end{abstract}

\tableofcontents


\section{Introduction}

Gravity in three dimensions has for some time now been an interesting model to test theories of -classical and quantum- gravity, and a consistent non-trivial theory would bring the prospect of clarifying many intricate aspects of gravity. A fundamental breakthrough was made in the study of the asymptotics, revealing the emergence of a Virasoro algebra at the boundary \cite{Brown:1986nw}. One can thus expect a
dual 2d CFT description, and this discovery can be thought of as a precursor of the AdS/CFT
correspondence. However, pure Einstein gravity in three dimensions is locally trivial at the classical level and does not exhibit propagating degrees of freedom. Hence there was a need to modify it.

One way of deforming pure Einstein gravity is by introducing a negative cosmological constant, leading to a theory with black hole solutions \cite{Banados:1992wn}. Another possibility of deformation is to add gravitational Chern-Simons term. In that case the theory is called topologically massive gravity (TMG), and contains a massive graviton \cite{Deser:1982vy} \cite{Deser:1981wh}. When both cosmological and Chern-Simons terms are included in a theory, it yields cosmological topologically massive gravity (CTMG). Such theory features both gravitons and black holes.

Following Witten's proposal in 2007 to find a CFT dual to Einstein gravity \cite{Witten:2007kt}, the graviton 1-loop partition function was calculated in \cite{Maloney:2007ud}. However, discrepancies were found in the results. In particular, the left- and right-contributions did not factorize, therefore clashing with the proposal of \cite{Witten:2007kt}. 

Soon after, a non-trivial slightly modified version of Witten's construct was proposed by Li, Song and Strominger \cite{Li:2008dq}. Their theory, in which Einstein gravity was replaced by \textit{chiral} gravity can be viewed as a special case of topologically massive gravity \cite{Deser:1982vy} \cite{Deser:1981wh}, at a specific tuning of the couplings, and is asymptotically defined with AdS$_3$ boundary conditions, according to Fefferman-Graham-Brown-Henneaux \cite{Brown:1986nw}. A particular feature of the theory was that one of the two central charges vanishes, whilst the other one can have a non-zero value. This gave an indication that the partition function could factorize.

Shortly after the proposal of \cite{Li:2008dq}, Grumiller \textit{et al.} noticed that relaxing the Brown-Henneaux boundary conditions allowed for the presence of a massive mode that forms a Jordan cell with the massless graviton, leading to a degeneracy at the critical point \cite{Grumiller:2008qz}. In addition, it was observed that the presence of the massive mode spoils the chirality of the theory, as well as its
unitarity. Based on these results, the dual CFT of critical cosmological topologically massive gravity (cTMG) was conjectured to be logarithmic, and the massive mode was called the \textit{logarithmic partner} of the graviton. Indeed, Jordan cell structures are a noticeable feature of logarithmic CFTs, that are non unitary theories (see \cite{Gurarie:1993xq} as well as the very nice introductory notes \cite{Flohr:2001zs} and \cite{Gaberdiel:2001tr}). The correspondence distinguishes itself on the conjectured dual LCFT side by a left-moving energy-momentum tensor $T$ that has a logarithmic partner state $t$ with identical conformal weight, forming the following Jordan cell

\begin{eqnarray}
 L_0   \begin{pmatrix} \ket{T} \\ \ket{t} \end{pmatrix} =  \begin{pmatrix} 2  & 0 \\ 1 & 2 \end{pmatrix} \begin{pmatrix} \ket{T} \\ \ket{t} \end{pmatrix}.
\end{eqnarray}

A major achievement towards the formulation of the correspondence was the calculation of correlation functions in TMG \cite{Skenderis:2009nt,Grumiller:2009mw}, which confirmed the existence of logarithmic correlators of the type $\langle T(x) t(y) \rangle =b_L/(x-y)^4$ that arise in LCFT, with $b_L$ commonly referred as the logarithmic anomaly. Subsequently, the 1-loop graviton partition function of cTMG  on the thermal AdS$_3$ background was calculated in \cite{Gaberdiel:2010xv}, resulting in the following expression

\begin{eqnarray}
\label{z tmg}
{Z_{\rm{cTMG}}} (q, \bar{q})= \prod_{n=2}^{\infty} \frac{1}{|1-q^n|^2} \prod_{m=2}^{\infty} \prod_{\bar{m}=0}^{\infty} \frac{1}{1-q^m \bar{q}^{\bar{m}}},
\end{eqnarray}
where the first product can be identified as the three-dimensional gravity partition function $Z_{0,1}$ in \cite{Maloney:2007ud}, and is therefore not modular invariant. The double product is the partition function of the logarithmic single and multi particle logarithmic states, and will be the central object of this work.

The corresponding expression of the partition function on the CFT side was derived in \cite{Gaberdiel:2010xv} and given the form

\begin{equation}
 {\rm{Z_{LCFT}}}(q, \bar{q})  = {\rm{Z^0_{LCFT}}} (q, \bar{q}) + \sum_{h,\bar{h}} N_{{h,\bar{h}}} q^h \bar{q}^{\bar{h}} \prod_{n=1}^{\infty} \frac{1}{|1-q^n|^2},
\end{equation}
with 
\begin{equation}
{\rm{Z^0_{LCFT}}(q, \bar{q})= Z_{\Omega} + Z_t} = \prod_{n=2}^{\infty} \frac{1}{|1-q^n|^2} \left( 1 + \frac{q^2}{|1-q|^2}\right),
\end{equation}
where $\Omega$ is the vacuum of the holomorphic sector, and $t$ denotes the logarithmic partner of the energy momentum tensor $T$.

As pointed out in \cite{Grumiller:2013at}, a better understanding of the partition function from the CFT side is desirable, in particular how to precisely match the combinatorics of multi particle logarithmic states on the gravity side to states on the CFT side. Motivated by that, the partition function  was reformulated in \cite{Mvondo-She:2018htn}, and recast in terms of Bell polynomials. Furthermore, it was shown that the partition function could be rewritten using the more usual language of Hilbert series, leading to an identity between the generating function of Bell polynomials and the celebrated plethystic exponential. In addition, the Bell polynomials formalism showed an interesting use in revealing hidden symmetry actions on the $n$-particle terms of the partition function (this point will be given an interpretation in this work).  

Despite the aforementioned achievements made since the conjecture of the AdS$_3$/LCFT$_2$ in 2008, it is fair to say that very little is known about the nature of the logarithmic states. In this paper, we try to deal with this issue by exploiting the results of \cite{Mvondo-She:2018htn} to study the moduli space of the logarithmic sector. The concept of moduli space originates from algebraic geometry. The behaviour of certain geometric objects such as collections of $n$ distinct ordered points on a given topological space can be understood by finding a space $X$ which parametrizes these objects, i.e a space whose points are in bijection with equivalence classes of these objects. A space $X$ with such a correspondence is called a moduli space, and it parametrizes the types of objects of interest, which in our case will be the logarithmic states. The geometry of moduli spaces can be encoded in their generating functions. We take advantage of that fact to give a symmetric group interpretation of the results obtained in \cite{Mvondo-She:2018htn}, and to show that as the symmetric product $S^n ( \mathbb{C}^2 )$, the moduli space of the logarithmic states described by palindromic Hilbert series is a Calabi-Yau singular space. 

This paper is organized as follows. In section 2, we give a brief description of partition functions in critical massive gravities. In particular, we recall how some infinite products can usefully be rewritten as generating functions of Bell polynomials. This is an interesting application of Bell polynomials in theoretical physics. In section 3, we discuss the symmetric product structure that appears from the sub-partition function of the logarithmic sector. We start by showing that the counting in the latter expressed in terms of Bell polynomials, is related to the cycle index of the symmetric group, $i.e$ a polynomial in several variables that counts objects that are invariant under the action of the symmetric group. A symmetric group interpretation of the counting of the logarithmic states is then given by showing that the partition function of those states is the generating function of the 1-part Schur polynomials. Then in section 4, we show that the moduli space of the logarithmic states is the $n$-th symmetric product of $\mathbb{C}^2$, by showing that the partition function of the logarithmic states is the generating function of Molien series. We also give an interpretation of some of the results from \cite{Mvondo-She:2018htn}, in which differential operators acting on the  Bell polynomials were constructed, as differential operators acting on orbifolds. In section 5, using properties of Bell polynomials, we show that the Hilbert series of the moduli space $S^n \left( \mathbb{C}^2 \right)$ have a very interesting palindromic property. Such property has already been discussed in the physics literature, for instance in the context of Hilbert series for moduli spaces of supersymmetric vacua of gauge theories \cite{Gray:2008yu,Hanany:2008sb}, or again in \cite{Forcella:2008bb,Forcella:2008eh}, and more recently in the context of primary fields in dimension four free CFT \cite{deMelloKoch:2017caf,deMelloKoch:2017dgi}. According to a beautiful theorem by Stanley \cite{stanley1978hilbert}, the palindromic property of Hilbert series associated to the counting of the logarithmic states indicates that the moduli space is Calabi-Yau. Finally, a brief discussion is given in section 6, then we conclude and give some research prospects for the future in section 7.


\section{Partition functions of critical massive gravities}
The graviton 1-loop partition function of cosmological topologically massive gravity and new massive gravity both at the critical point were calculated in \cite{Gaberdiel:2010xv}. In the case of topologically massive gravity, the computation was given the form

\begin{eqnarray}
\label{eq:1}
{Z_{\rm{cTMG}}} (q, \bar{q})= \prod_{n=2}^{\infty} \frac{1}{|1-q^n|^2} \prod_{m=2}^{\infty} \prod_{\bar{m}=0}^{\infty} \frac{1}{1-q^m \bar{q}^{\bar{m}}},
\end{eqnarray}
and for new massive gravity, the partition function was derived as

\begin{eqnarray}
\label{cNMG}
{Z_{\rm{cNMG}}}(q, \bar{q})= \prod_{n=2}^{\infty} \frac{1}{|1-q^n|^2} \prod_{m=2}^{\infty} \prod_{\bar{m} =0}^{\infty} \frac{1}{1-q^m \bar{q}^{\bar{m}}} \prod_{l=2}^{\infty} \prod_{\bar{l}=0}^{\infty} \frac{1}{1-q^l \bar{q}^{\bar{l}}}.
\end{eqnarray}

Shortly after these results, topologically massive gravity was generalized to higher spins in \cite{Bagchi:2011vr}, and the 1-loop partition function for topologically massive higher spin gravity (cTMHSG) for arbitrary spin was calculated in \cite{Bagchi:2011td}. A special attention was given to spin-3 case for which the partition function was expressed as

\begin{eqnarray}
\label{cTMHSG}
{\rm{Z^{(3)}_{cTMHSG}}} (q, \bar{q})&=& \prod_{n=2}^{\infty} \frac{1}{|1-q^n|^2} \prod_{m=2}^{\infty} \prod_{\bar{m}=0}^{\infty} \frac{1}{1-q^m \bar{q}^{\bar{m}}} \nonumber \\  &\times& \left[ \prod_{n=3}^{\infty} \frac{1}{|1-q^n|^2} \prod_{m=3}^{\infty} \prod_{\bar{m}=0}^{\infty} \frac{1}{1-q^m \bar{q}^{\bar{m}}} \prod_{m=4}^{\infty} \prod_{\bar{m}=3}^{\infty} \frac{1}{1-q^m \bar{q}^{\bar{m}}} \right]. \nonumber \\
\end{eqnarray}

Recently, motivated by the desire for a better understanding of the combinatorics of
the logarithmic excitations in the partition function of these critical massive gravities, and with the eventual goal of having a more concrete grasp of their conjectured holographic (L)CFT duals, the partition functions were shown to be usefully expressed in terms of Bell polynomials \cite{Mvondo-She:2018htn}.

Bell polynomials are very useful in many areas of mathematics and have enjoyed many applications in physics as well. For instance, recently expressions of canonical and grand canonical partition functions of interacting quantum gases of Statistical Mechanics systems were rederived in terms of Bell Polynomials by the authors of \cite{zhou2018canonical}, in which the Bell polynomials are the Mayer (cluster) expansion. Also, in \cite{fronczak2012microscopic} \cite{siudem2013partition} \cite{fronczak2013cluster} \cite{siudem2016exact} and references therein, the use of Bell polynomials is discussed for partition functions, suggesting some interaction of particles in the theories concerned.

Bell polynomials have also appeared in the study of partition functions of BPS bosonic operators. Following \cite{Lucietti:2008cv}, if we consider such partition functions at finite $N$, and denote them $Z_k( \vec{\beta} ;N)$ where $ \vec{\beta} = (\beta_1,\ldots,\beta_k)$ is a set of $k$ chemical potentials conjugate to $n_i$, the quantum numbers of the various conserved charges in the superconformal theory in
question, and $N$ the rank of the gauge group, then these partition functions typically take the following expression 

\begin{eqnarray}
\mathcal{Z}_k( \vec{\beta} ,p)= \prod_{n_1,n_2,\ldots,n_k \geq 0}^{\infty} \frac{1}{1- p \exp{(- \vec{\beta} \cdot \vec{n}) }},
\end{eqnarray}
where the infinite product converges if $|p|<1$ and $\Re{(\beta_i)}>0$. Furthermore, the infinite product is the generating function for $Z_k(\vec{\beta};N)$ expressed as

\begin{eqnarray}
\label{grand canonical PF}
\mathcal{Z}_k(\vec{\beta},p) = \sum_{N=0}^{\infty} Z_k( \vec{\beta} ;N) p^N.
\end{eqnarray}
Eq. (\ref{grand canonical PF}) is the grand canonical partition function for bosons in a $k$-dimensional harmonic oscillator potential with $p$ as the fugacity, defined as the chemical potential that keeps track of particle number $N$. These partition functions correspond for instance to $\frac{1}{2}$-BPS or $\frac{1}{4}$-BPS states in $\mathcal{N}=4$ SYM when $k=1,2$ respectively, or to $\frac{1}{8}$-BPS states in the M2-brane world-volume for $k=4$ and to $\frac{1}{4}$-BPS states in the M5-brane world-volume for $k=2$ (the $(2,0)$ SCFT in six dimensions) \cite{Lucietti:2008cv}.

In \cite{Mvondo-She:2018htn}, it was shown that the partition functions of critical gravities can be expressed in terms of Bell polynomials. In the specific case of critical cosmological TMG, writing Eq. (\ref{z tmg}) as 

\begin{eqnarray}
\label{z grav z log}
{Z_{\rm{cTMG}}} (q, \bar{q})=  {Z_{\rm{gravity}}} (q, \bar{q}) \cdot {Z_{\rm{log}}} (q, \bar{q}),
\end{eqnarray}
where

\begin{eqnarray}
{Z_{\rm{gravity}}} (q, \bar{q})= \prod_{n=2}^{\infty} \frac{1}{|1-q^n|^2}, \hspace{0.5cm} \mbox{and}
\hspace{0.5cm} {Z_{\rm{log}}} (q, \bar{q}) = \prod_{m=2}^{\infty} \prod_{\bar{m}=0}^{\infty} \frac{1}{1-q^m \bar{q}^{\bar{m}}},
\end{eqnarray}
it was shown that ${Z_{\rm{log}}} (q, \bar{q})$ is the generating function of Bell polynomials 

\begin{eqnarray}
\label{z log bell}
{Z_{\rm{log}}} (q, \bar{q}) = \sum_{n=0}^{\infty} \frac{Y_n}{n!} \left( q^{2} \right)^n. 
\end{eqnarray}
In Eq. (\ref{z log bell}), $Y_n$ is the (complete) Bell polynomial with variables $g_1, g_2, \ldots, g_n$ such that 

\begin{eqnarray}
Y_{n}(g_1, g_2, \ldots, g_{n})= \sum_{\vec{k} \vdash n} \frac{n!}{k_1! \cdots k_n!} \prod_{j=1}^{n} \left( \frac{g_j}{j!} \right)^{k_j},
\end{eqnarray}
with 

\begin{eqnarray}
\vec{k} \vdash n = \left\{ \vec{k} = \left( k_1,k_2,\ldots,k_n  \right) \hspace{0.1cm}| \hspace{0.1cm} k_1 + 2 k_2 + 3 k_3 +\cdots + n k_n = n \right\},
\end{eqnarray}
and

\begin{eqnarray}
g_n = (n-1)! \sum_{m \geq 0, \bar{m} \geq 0}  q^{nm} \bar{q}^{n \bar{m}} = (n-1)! \frac{1}{|1-q^n|^2}.
\end{eqnarray}
Similarly, Eq. (\ref{cNMG}) takes the form

\begin{subequations}
\begin{align}
{\rm{Z_{cNMG}}}(q, \bar{q}) &= {Z_{\rm{gravity}}} (q, \bar{q}) \cdot {Z_{\rm{log}}} (q, \bar{q}) \cdot  {\bar{Z}_{\rm{log}}} (q, \bar{q}) \\ &= \label{cNMG with Bell pol.}  \prod_{n=2}^{\infty} \frac{1}{|1-q^n|^2} \left( \sum^{\infty}_{m=0} \frac{Y_m}{m!} \left( q^2 \right)^m \right) 
\left( \sum^{\infty}_{l=0} \frac{Y_l}{l!} \left( \bar{q}^2 \right)^l  \right), 
\end{align}
\end{subequations}

\noindent while Eq. (\ref{cTMHSG}) can be written as 

\begin{eqnarray}
{\rm{Z^{(3)}_{cTMHSG}}} (q, \bar{q}) &=& \chi_0 (\mathcal{W}_3) \times \bar{\chi}_0 (\mathcal{W}_3) \left( \sum^{\infty}_{m=0} \frac{Y_m}{m!} \left( q^2 \right)^m \right) \nonumber \\ &\times& \left( \sum^{\infty}_{l=0} \frac{Y_l}{l!} \left( q^3 \right)^l \right)
\left( \sum^{\infty}_{k=0} \frac{Y_k}{k!} \left( q^4 \bar{q}^3 \right)^k  \right),
\end{eqnarray}
with $\chi_0 (\mathcal{W}_3)$ and $\bar{\chi}_0 (\mathcal{W}_3)$ as the holomorphic and antiholomorphic $W_3$ vacuum characters.

The logarithmic partition function can therefore be given a general form that reads

\begin{eqnarray}
Z_{\rm{log}}(\nu; q,\bar{q}) = \prod_{m=0}^{\infty} \prod_{\bar{m}=0}^{\infty} \frac{1}{1- \nu q^m \bar{q}^{\bar{m}}} = \sum_{n=0}^{\infty} \frac{Y_n}{n!} \nu^{n},
\end{eqnarray}
where the variable $\nu$ represents a monomial in (highest) weight that can be holomorphic denoted by $q^h$ or antiholomorphic denoted by $\bar{q}^{\bar{h}}$, with $h$ and $\bar{h}$ as the conformal weights of holomorphic and antiholomorphic logarithmic partner states. 

Since the conjecture of critical topologically massive gravity as the dual of a logarithmic conformal field theory, it is fair to say that little work has been done in the description of the logarithmic states. In the next section, we would like to make a few steps in that direction by using results recently obtained from the partition function of critical massive gravities to extract information about the moduli space of the logarithmic states.
\section{A representation theoretic aperçu of $Z_{\rm{log}}(\nu; q,\bar{q})$}
\label{representation theory section}
In this section, we give a representation theoretic interpretation of the results obtained in \cite{Mvondo-She:2018htn}. We start by giving some preliminaries intended to give grounds for the transition from the combinatorial description summarized in the previous section to the representation theoretic language that will be used in this section.

\subsection{Preliminaries}

\subsubsection{Permutations and cycles of the Symmetric Group $S_n$}
\label{Permutations and cycles of the Symmetric Group}
Consider a finite set  denoted by $\mathbb{S}= \left\{ 1, \ldots, n \right\}$. A permutation  of $\mathbb{S}$ is a one-to-one mapping of $\mathbb{S}$ onto itself. The symmetric group $S_n$ is the group of all permutations of the $n$ elements of $\mathbb{S}$. The order $\left| S_n \right|$ of the group, i.e the number of elements of $S_n$ is equal to $n!$. An effective way of describing permutations is by using the language of \textit{cycles}. Indeed, given a permutation $\pi$, $\mathbb{S}$ can be split into cycles, which are subsets of $\mathbb{S}$ subject to cyclic permutation by $\pi$. As a result, every permutation of the elements of $\mathbb{S}$ can be written as a product of disjoint cycles. For example, the symmetric group $S_3$ consists of the $3!=6$ permutation elements

\begin{subequations}
\begin{align}
& \mathbbm{1}= \mbox{identical permutation}, \\
& 1 \rightarrow 2 \rightarrow 1 \hspace{0.5cm} \mbox{and}  \hspace{0.5cm} 3 \rightarrow 3,\\ 
& 2 \rightarrow 3 \rightarrow 2 \hspace{0.5cm} \mbox{and}  \hspace{0.5cm} 1 \rightarrow 1,\\
& 3 \rightarrow 1 \rightarrow 3 \hspace{0.5cm} \mbox{and}  \hspace{0.5cm} 2 \rightarrow 2,\\
& 1 \rightarrow 2 \rightarrow 3 \rightarrow 1, \\ 
& 1 \rightarrow 3 \rightarrow 2 \rightarrow 1, 
\end{align}
\end{subequations}

\noindent which can be expressed in cycle notation as 

\begin{eqnarray}
S_3 = \left\{ \mathbbm{1}, (12),(23),(31), (123), (132) \right\},
\end{eqnarray}

\noindent where (123) and (132) are cycles of length 3, (12),(23) and (31) are cycles of length 2, and $\mathbbm{1}$ has length 1. 

A permutation can be assigned the \textit{cycle symbol} 

\begin{eqnarray}
\label{cycle symbol}
1^{m(1)} 2^{m(2)} \cdots n^{m(n)},
\end{eqnarray}
if its disjoint cycle product contains $m(k)$ number of $k$-cycles, with $1 \leq k \leq n$. The number $m(k)$ is called the \textit{multiplicity} of $k$-cycles in the disjoint cycle product of the given permutation. For instance, the permutation 

\begin{eqnarray}
\begin{pmatrix}
1 & 2 & 3 & 4 & 5 & 6 & 7 & 8 & 9 \\
2 & 3 & 1 & 5 & 4 & 6 & 8 & 7 & 9
\end{pmatrix}
= (6)(9)(45)(78)(123),
\end{eqnarray}

\noindent has the cycle symbol $1^2 2^2 3^1$. 

\subsubsection{Partitions}
The cycle structure of group elements in $S_n$ can be represented by partitions of $n$. A partition is a sequence 

\begin{eqnarray}
\lambda = \left(  \lambda_1, \lambda_2, \ldots, \lambda_k, \ldots  \right) 
\end{eqnarray}

\noindent of non-negative integers in non-increasing order 

\begin{eqnarray}
\lambda_1 \leq \lambda_2 \leq \ldots \leq \lambda_k \leq \ldots
\end{eqnarray}

\noindent that contains a finite number of non-zero terms. $\lambda_k$ is called the \textit{parts} of $\lambda$. The \textit{length} $l(\lambda)$ is the number of parts of $\lambda$. The \textit{weight} $\left| \lambda \right|$ is the sum of the parts, and a partition with weight $|\lambda|=n$ is a partition of $n$ denoted $\lambda \vdash n $.

A partition $|\lambda|$ of $n$ can conveniently be expressed using a notation that indicates the number of times $m(k)$ an integer $k$ occurs as a part  

\begin{eqnarray}
\label{partition=cycle symbol}
\lambda = \left( 1^{m(1)} 2^{m(2)} \cdots k^{m(k)} \cdots   \right). 
\end{eqnarray}

\noindent Eq. (\ref{partition=cycle symbol}) is very similar to the cycle symbol of a permutation seen in Eq. (\ref{cycle symbol}). This is because partitions of $n$ are in one to one correspondence with the cycle structure of $S_n$. As an example, the partitions of 5 are ordered as 

\begin{eqnarray}
\label{partitions of 5}
(5), (41), (32), (3 1^2), (2^2 1), (2 1^3), (1^5).
\end{eqnarray}

\subsubsection{Young diagrams}
Partitions can be graphically represented by Young diagrams. They are are denoted by $R \vdash n$
 and consist of diagrams with $n$ boxes arranged in left-justified rows stacked in such a way that the order of the row lengths is weakly decreasing. Given this convention, in the partition $\lambda$ of $n$ the $k^{\rm{th}}$ part $\lambda_k$ corresponds to the $k^{\rm{th}}$ row of the frame, consisting of $\lambda_k$ boxes. The partitions of 5 in Eq. (\ref{partitions of 5}) can for instance be expressed in terms of Young diagrams as follows
 
 \begin{center}
\begin{tabular}{ccccc}
 $\Yboxdim{8pt} \yng(5)$ & $\Yboxdim{8pt} \yng(4,1)$ & $\Yboxdim{8pt} \yng(3,2)$ & $\Yboxdim{8pt} \yng(3,1,1)$ & $\Yboxdim{8pt} \yng(2,2,1)$\\ 
 (5) & (41) & (32) & (311) & (221) \\  
 &&&& \\
 & $\Yboxdim{8pt} \yng(2,1,1,1)$ &  & $\Yboxdim{8pt} \yng(1,1,1,1,1)$ & \\ 
 & (2111) & & (11111) & \\
\end{tabular}
\end{center}

The relationship between partitions, Young diagrams and disjoint cycles representing elements of the symmetric group can be illustrated for $n=2$ and $n=3$ in Table 1 and Table 2. 

\begin{table}[h]
\centering
\renewcommand{\arraystretch}{1.9}
\begin{tabular}{ |m{1.8cm}|c|c|c| } 
 \hline
 Partition & Young diagram & Disjoint cycle product &  \\ \hline
 2 = 2 & $\Yboxdim{8pt} \yng(2)$  & (12) & $2^1$ \\ \hline
 2 = 1 + 1 & $\Yboxdim{8pt} \yng(1,1)$ & (1)(2) & $1^2$ \\ [1.5ex] \hline
\end{tabular}
\caption{Young diagrams for the symmetric group $S_2$}
\label{table 1}
\end{table}

\begin{table}[h]
\centering
\renewcommand{\arraystretch}{1.9}
\begin{tabular}{ |m{2.5cm}|c|c|c| } 
 \hline
 Partition & Young diagram & Disjoint cycle product &  \\ \hline
 3 = 3 & $\Yboxdim{8pt} \yng(3)$ & (123), (132) & $3^1$ \\ \hline
 3 = 2 + 1 & $\Yboxdim{8pt} \yng(2,1)$ & (12)(3), (23)(1), (31)(2) & $2^1 1^1$ \\ [1.5ex] \hline
 3 = 1 + 1 + 1 & $\Yboxdim{8pt} \yng(1,1,1)$ & (1)(2)(3) & $1^1 1^1 1^1$ \\ [3.5ex]
 \hline
\end{tabular}
\caption{Young diagrams for the symmetric group $S_3$}
\label{table 2}
\end{table}

\subsubsection{Symmetry and the cycle index}
We close the preliminaries by make the first connection between the combinatorial results obtained in \cite{Mvondo-She:2018htn} with the representation theory of the Symmetric Group.

In combinatorics, only few formulae can be applied systematically in all cases of a given problem. \textit{P\'{o}lya theory} is one such example, as it enables to count the number of items under specific constraints, such as number of colors or more generally symmetry.  From a group theory perspective, counting objects such as states "up to symmetry" means counting the orbits of some group of symmetries on the set of states that are being counted. A standard procedure to solve this problem is to use the \textit{orbit-counting (Burnside's) lemma}  \cite{cameron_1994}. Alternatively, the counting can be made systematic by using a multivariate polynomial associated with a permutation group, called the \textit{cycle index}. Before discussing the latter point further, we make the following observations. Looking at Table \ref{table 1}, in the case of $S_2$, reexpressing the only cycle $2^1$ by a variable $g_2$ and the two disjoint cycle products (1)(2) with group theoretic notation $1^2$ by $g^2_1$, we see that the function $Y_2 = g_1^2 + g_2$ counts all cycle products of $S_2$. Similarly, changing the notations $3^1$, $2^1 1^1$ and $1^1 1^1 1^1$ in Table \ref{table 2} by $g^1_3$, $g_2^1 g^1_1$ and $g_1 g_1 g_1 = g_1^3$ respectively, we see that $Y_3 = g_1^3 + 3 g_2 g_1 + g_3$ counts all cycle products of $S_3$. These observations illustrate the  notion of cycle index, which will now be defined.

Consider $G$ as the group whose elements $g$ are the permutations of $\mathbb{S}$, and let $Z_G (g_1,g_2, \ldots, g_n)$ be the polynomial in $n$ variables $g_1,g_2, \ldots, g_n$ such that for each $g \in G$, the type of $g$ is given by the product $z_g (g_1,g_2, \ldots, g_n)  = g_1^{m(1)} g_2^{(m(2)} \cdots  g_n^{m(n)}$ as the partition of $n$ of part $\lambda  = \left( 1^{m(1)} 2^{m(2)} \cdots n^{m(n)} \right)$, and such that $n = 1 \cdot m(1) + 2 \cdot m(2) + \ldots + n \cdot m(n)$. Then, the polynomial 

\begin{subequations}
\begin{align}
Z_G (g_1,g_2, \ldots, g_n) &= \frac{1}{|G|} \sum_{g \in G} z_g (g_1,g_2, \ldots, g_n)  \\ &= \frac{1}{|G|} \sum_f f \left( m(1), \ldots, m(n) \right)  \cdot g_1^{m(1)} g_2^{m(2)} \cdots  g_n^{m(n)} 
\end{align}
\end{subequations}

\noindent is defined as the cycle index of $G$, with $f \left( m(1), \ldots, m(n) \right)$ representing the number of permutations of type $\left( 1^{m(1)} 2^{m(2)} \cdots n^{m(n)} \right)$. The formula above is reminiscent of Burnside's lemma, except that here, one distinguishes the cycles of different lengths, and specifies the number of cycles there are. 

In the present case, we are interested in the cycle index of the symmetric group $S_n$, which is defined by the formula \cite{cameron_1994}

\begin{eqnarray}
\mathcal{Z} \left( S_n \right)= \sum_{c_1 + 2 c_2 + \ldots + l c_l=n} \frac{1}{\prod_{k=1}^n k^{c_k} c_k !} \prod_{k=1}^n x_k^{c_k}.
\end{eqnarray}

\noindent It is well known that the cycle index of the symmetric group $S_n$ can be expressed in terms of (complete) Bell polynomials as follows

\begin{eqnarray}
\label{cycle index bell polynomial}
Z(S_n) = \frac{Y_n (0!a_1, 1! a_2, \ldots, (n-1)! a_n)}{n!}.
\end{eqnarray}
Then, setting $a_n= \frac{1}{|1-q^n|^2}$, we can immediately identify the arguments in the above equation with the term $g_n= (n-1)! \frac{1}{|1-q^n|^2}$. $Z_{log} \left( q,\bar{q} \right)$ can therefore be rewritten as

\begin{eqnarray}
\label{z log is cycle index}
 Z_{\rm{log}} \left(\nu; q,\bar{q} \right) = \sum_{n=0} Z \left( S_n \right) \left( \nu \right)^n.
\end{eqnarray}

Eq. (\ref{z log is cycle index}) is the first indication that $Z_{\rm{log}} \left( \nu; q,\bar{q} \right)$ counts a collection of spaces under the action of the symmetric group. This preliminary result will be made more precise in the next part of this section from a representation theoretic point of view.


\subsection{$Z_{\rm{log}} \left( \nu; q,\bar{q} \right)$  as the generating function of 1-part Schur polynomials}
From a mathematics point of view, partitions are directly related to the representation theory of permutation groups. Indeed, the number of irreducible representations of a permutation group is equal to the number of orbits (or disjoint cycles) of the permutation group with respect to inner automorphism, or in other words to the number of conjugacy classes, which is equal to the number of partitions of the group order. As we have seen above, graphically this is represented by the number of Young diagrams. Physically, partitions play a important role in describing multi-particle systems. We make use of the theoretical results mentioned earlier to show that $Z_{\rm{log}} \left( \nu; q,\bar{q} \right)$ is the generating function of the 1-part Schur polynomial. This result allows for a more explicit description of the multi-particle system under investigation. 

Let $\mathbb{C}$ be the field of complex numbers, and $\mathcal{M}\mathcal{A}\mathcal{T}_n$ the set of all $n \times n$ matrices with entries in $\mathbb{C}$. The complex general linear group of degree $n$ denoted $GL_n$ is then the group of all $X= \left( x_{i,j}  \right)_{n \times n} \in \mathcal{M}\mathcal{A}\mathcal{T}_n$, and the group homomorphism $X: G \rightarrow GL_n$ is a matrix representation of a group $G$. Let $V$ be a vector space and $GL(V)$ the set of all invertible linear transformations of $V$ to itself. If dim $V=n$ (say for instance if $V$ is the vector space $\mathbb{C}^n$ given a matrix representation $X$ of degree $n$), then the group $GL(V)$ is isomorphic to $GL_n$, and it is possible to define the group homomorphism $\rho : G \rightarrow GL(V) $. In other words, the vector space $V$ carries a representation of $G$. As we will see now, the decomposition of tensor products of the representation $V$ plays a crucial role in the interpretation of the counting organised in $Z_{\rm{log}} \left( \nu; q,\bar{q} \right)$.

Let $V$ be a representation of $G$, and let $\rho$ be the associated group homomorphism $\rho : G \rightarrow GL(V) $. By tensoring $n$ copies of $V$, one gets the space $V^{\otimes n}$. If $v_1, \ldots, v_d$ is a basis of $V$, then a basis of $V^{\otimes n}$ is the collection of vectors $v_{k_{1}}, \ldots, v_{k_{n}}$, where the indexes $k_1, \ldots, k_n$ range over $\left\{ 1 \cdots d \right\}^n$, so that $V^{\otimes n}$ has dimension $d^n$. We take $G$ to be the symmetric group $S_n$. $V^{\otimes n}$ admits an action of $S_n$ by permutation of the factors of $V$ with $V^{\otimes n}$. In other words, re-expressing the vector basis of $V^{\otimes n}$ in braket notation as $\ket{i_1 i_2 \cdots i_n}$, the permutation $\pi \in S_n$ is the map $\pi: V^{\otimes n} \rightarrow V^{\otimes n}$ which acts on the basis as 

\begin{eqnarray}
\pi \ket{i_1 i_2 \cdots i_n} \rightarrow \ket{i_{\pi(1)} i_{\pi(2)} \cdots i_{\pi(n)}}.
\end{eqnarray}

\noindent The matrix elements of $\pi$ have the form

\begin{eqnarray}
\pi^{I}_{J} = \bra{i_1 i_2 \cdots i_n} \pi \ket{i_{\pi(1)} i_{\pi(2)} \cdots i_{\pi(n)}},
\end{eqnarray}

\noindent where $I$ and $J$ stand for $i_1 i_2 \cdots i_n$ and $j_1 j_2 \cdots j_n$ respectively. 

The action of $S_n$ on $V^{\otimes n}$ is reducible, however we can introduce the projection operators 

\begin{eqnarray}
\mathcal{P}_{R} = \frac{1}{n!} \sum_{\pi \in S_n} \chi_R(\pi) \pi  \hspace{0.5cm}\Longleftrightarrow \hspace{0.5cm} \left( \mathcal{P}_{R}  \right)^I_J =  \frac{1}{n!} \sum_{\pi \in S_n} \chi_R(\pi) \left( \pi \right)^I_J,
\end{eqnarray}

\noindent where $R$ is a Young diagram with $n$ rows and $\chi_R(\pi)$ is the character of the matrix representing $\pi$ in the irreducible representation $R$, such that $\left( \mathcal{P}_{R}  \right)^I_J$ act on $V^{\otimes n}$ projecting onto the irreducible representations contained in $V^{\otimes n}$. 

Let us now consider the complex matrix $X \in GL(V)$. By tensoring $n$ copies of $X$, one gets an operator $X^{\otimes n}$ that acts on the space $V^{\otimes n}$. Denoting the matrix elements by 

\begin{eqnarray}
\left( X^{\otimes n} \right)^I_J = X^{i_1}_{j_1} X^{i_2}_{j_2} \cdots X^{i_n}_{j_n},
\end{eqnarray}

\noindent it is possible to write 

\begin{eqnarray}
\label{schur polynomials definition}
\chi_R(X) = \left( \mathcal{P}_{R}  \right)^I_J \left( X^{\otimes n} \right)^J_I = tr \left( \mathcal{P}_{R} X^{\otimes n} \right) = \frac{1}{n!} \sum_{\pi \in S_n} \chi_R(\pi) \cdot tr \left( \pi X^{\otimes n} \right), 
\end{eqnarray}

\noindent where Eq. (\ref{schur polynomials definition}) defines the Schur polynomials $\chi_R(X)$. Schur polynomials are very effective in the description of multi-trace structures. Indeed, $tr \left( \pi X^{\otimes n} \right)$ is a single trace structure in the space $V^{\otimes n}$. However, specifying $\pi$ and contracting the indices in the matrix elements $\left( X^{\otimes n} \right)^I_J$ produce a multi-trace structure. Appendix \ref{Properties of Schur polynomials} shows the properties of Schur polynomials and Young diagrams for $S_2$ and $S_3$. In general, any multi-trace structure involving $n$ $X$ matrices may be obtained from a single trace of an $S_n$ permutation acting on $X^{\otimes n}$ in $V^{\otimes n}$. In particular, by writing

\begin{eqnarray}
\left( tr X^n \right) = \frac{1}{\left( 1 - q^n \right) \left( 1- \bar{q}^n \right)},
\end{eqnarray}

\noindent it is possible to identify the first three 1-part Schur polynomials as

\begin{subequations}
\begin{align}
& \chi_{\Yboxdim{4pt} \yng(1)} = (tr X), \\
& \chi_{\Yboxdim{4pt} \yng(2)} = \frac{1}{2!} \left( \left( tr  X \right)^2 + \left( tr  X^2 \right) \right), \\
& \chi_{\Yboxdim{4pt} \yng(3)} = \frac{1}{3!} \left( \left( tr  X \right)^3 + 3 \left( tr  X \right) \left( tr X^2 \right) + 2 \left( tr  X^3 \right) \right),
\end{align}
\end{subequations}

\noindent At last, it appears that $Z_{log} \left( \nu; q, \bar{q}  \right)$ is the generating function of 1-part Schur polynomials expressed as 

\begin{eqnarray}
\label{Z as schur polynomials}
Z_{log} \left( \nu; q, \bar{q}  \right) = \sum_{n=0}^{\infty} \chi_{\underbrace{\Yboxdim{4pt} \yng(1) \cdots \yng(1)}_n}  \cdot (\nu)^n.
\end{eqnarray}

\subsection{Interpretation}
We close this section by giving a symmetric group interpretation of the results obtained in \cite{Mvondo-She:2018htn}, and discussed above. A large part of the combinatorial work interpreted from a representation theoretic perspective describes the implementation of a bosonic statistics to obtain the multiparticle contribution from the single particle one. The space of single particle is being tensored $n$-times and only the symmetric part of the product is being retained. The projectors $\mathcal{P}_R$ onto the symmetric part give rise to Schur polynomials labelled by Young diagrams with a single row as apparent in Eq. (\ref{Z as schur polynomials}), indicating that $Z_{log} \left( \nu; q, \bar{q}  \right)$ is the generating function of the (single row) 1-part Schur polynomials. In the next section, we discuss this interpretation using a well known theorem of invariant theory, that will allow us to give a novel description of the moduli space of the logarithmic states in critical massive gravities.


\section{Moduli space of logarithmic states}
In section \ref{representation theory section}, we made use of the expression of $Z_{\rm{log}} \left( \nu; q,\bar{q} \right)$ in terms of Bell polynomials to show how it encodes information about spaces invariant under the action of the symmetric group. From a different perspective, the Bell polynomial formulation will also be useful (section \ref{HW algebra subsection}) in the combinatorial description of a Fock space created by the action of generators of the Heisenberg-Weyl algebra. In this section, we use the fact that $Z_{\rm{log}} \left( \nu; q,\bar{q} \right)$ can also be expressed as the generating function of \textit{Hilbert series} to show the symmetric product orbifold structure of the moduli space of the logarithmic states.

Also called \textit{Molien} or \textit{Poincar\'{e}} function, the Hilbert series is a generating function familiar in algebraic geometry for counting the dimension of graded components of the coordinate ring (see Appendix \ref{Invariant theory}). Its approach has been developed and extensively used in theoretical physics with for instance the work of \cite{Pouliot:1998yv}, and notably with several applications under the so-called Plethystic Program initiated in \cite{Benvenuti:2006qr,Feng:2007ur}. In connection with the Plethystic Program, the Hilbert series have been the essential instrument of a systematic method that yields the generating function of multi-trace operators in gauge theory from the generating function of single-trace operators at large N. This formalism was shown to hold in the present setting of a function generating multi particle states from single particle ones. Indeed, we recall from \cite{Mvondo-She:2018htn} that the multivariate Hilbert series 

\begin{eqnarray}
\label{Hilbert series}
\mathcal{G}_{1} (q,\bar{q}) =  \sum^{\infty}_{m \geq 0,\bar{m} \geq 0}  q^{m}  \bar{q}^{\bar{m}} = \frac{1}{|1-q|^2}= \frac{1}{(1-q) (1-\bar{q})},
\end{eqnarray}
that counts the single particle contribution, can be acted upon by the bosonic plethystic exponential $PE^{\mathcal{B}}$ to generate new partition functions such that  

\begin{eqnarray}
\label{PE}
Z_{\rm{log}} \left( \nu; q,\bar{q} \right) = PE^{\mathcal{B}} \left[ \mathcal{G}_{1} (q,\bar{q}) \right] =  \exp \left( \sum^{\infty}_{n=1} \frac{\left( \nu \right)^n}{n}  \mathcal{G}_1 \left( q^n,\bar{q}^n \right) \right),
\end{eqnarray}
with 

\begin{eqnarray}
\mathcal{G}_1 \left( q^n,\bar{q}^n \right) = \frac{1}{|1-q^n|^2} = \frac{1}{(1-q^n) (1-\bar{q}^n)}. 
\end{eqnarray} 
The connection between the plethystic exponential in Eq. (\ref{PE}) and the cycle index discussion of section \ref{representation theory section} in terms of Bell polynomials can be made clear by noting that $a_1$ in Eq. (\ref{cycle index bell polynomial}) can be identified with the function $\mathcal{G}_1 \left( q, \bar{q} \right)$, and accordingly,  $a_n \equiv a_n \left( q, \bar{q} \right) = \mathcal{G}_1 \left( q^n,\bar{q}^n \right)$. In analogy with the aforementioned applications, this shows that the Hilbert series $\mathcal{G}_1 {\left( q, \Bar{q} \right)}$ counts single particle states, while the plethystic exponential $PE^{\mathcal{B}}\left[ \mathcal{G}_1 {\left( q, \Bar{q} \right)} \right]$ counts multi-particle states.

The formalism of Hilbert series acted upon by the plethystic exponential is well known for its use in describing algebraic and geometric aspects of moduli space. We will draw from that knowledge to study the configuration space of logarithmic states. Essential to this will be a discussion of symmetric products in the spirit of \textit{Hilbert schemes of points on surfaces} \cite{nakajimalectures}. 

\subsection{Hilbert schemes of points on surfaces and symmetric products}
The Hilbert scheme $X^{[n]}$ of points on a surface is a simple example of a moduli space. It consists in the description of the configuration space of $n$ points on $X$, $i.e$ the space of unordered $n$-tuples of points of $X$ \cite{nakajimalectures}. 

Formally, the Hilbert scheme of points can be defined as

\begin{eqnarray}
\label{hilbert scheme from ideals}
X^{[n]} := \left\{  \mbox{{\it{I}} | {\it{I}} is an ideal of $X[x_1, \ldots,x_n]$ with dim($X[x_1, \ldots,x_n]/I$)={\it{n}}}  \right\},
\end{eqnarray}

\noindent where $X[x_1, \ldots,x_n]$ is the coordinate ring of $X$. In the above definition, $X^{[n]}$ is considered as a set. It can be defined in a more geometric flavor as

\begin{eqnarray}
X^{[n]} := \left\{  \mbox{ {$Q_Z$} | {\it{$Q_Z$}} is a quotient ring of {\it{X}} with dim({\it{$Q_Z$}})={\it{n}} }  \right\}.
\end{eqnarray}

\noindent The algebra-geometry correspondence is expressed as

\begin{eqnarray}
0 \rightarrow I \rightarrow X \rightarrow Q_Z \rightarrow 0,
\end{eqnarray}

\noindent where $Z$ is the 0-dimensional subscheme of $X$, and $Q_Z$ is the coordinate ring of $Z$, and allows for flexibility of terminology between schemes and ideals.

The construction of a moduli space such as the Hilbert scheme can be accomplished by taking the quotient of $X$ by a group acting on it. In that endeavor, it is sensible to consider the quotient by the action of the symmetric group, since we do not distinguish between points. This gives the symmetric product 

\begin{eqnarray}
S^n X = \underbrace{X \times X \times \cdots \times X}_{n \hspace{0.1cm} \mbox{times}}/S_n.
\end{eqnarray}

\noindent However, the symmetric product $S^n X$ (also denoted $X^{(n)}$) is singular. Indeed, if for instance we consider the case $n=2$, the group action is not free along the diagonal $\mathcal{D} \subset X \times X$, which yields a singular locus along the diagonal in $S^2 X$. More precisely, approaching the diagonal corresponds in $X^2$ to the two points approaching each other, and eventually overlapping. At that stage, the system has lost one degree of freedom. A possible resolution of the problem would be to keep track of the direction the two points approach each other along. That is in fact the difference between $X^{[n]}$ and $S^n X$: The Hilbert scheme $X^{[n]}$ is a resolution of singularities of the symmetric product $S^n X$. When there exist $n$ distinct points $p_1, \ldots, p_n$ in $X$, each point defines both a point in $X^{[n]}$ and a point in $S^n X$ \cite{nakajimaquiver}, and setting the ideal of Eq. (\ref{hilbert scheme from ideals}) to

\begin{eqnarray}
I := \left\{ \mbox{$f \in X[x_1, \ldots,x_n] \hspace{0.2cm} | \hspace{0.2cm} f(p_1) =  \cdots = f(p_n)=0$}  \right\},
\end{eqnarray}
$I$ is indeed an ideal with dim($X[x_1, \ldots,x_n]/I$)={\it{n}}. This is the case when $\rm{dim} X=1$ ($i.e$ $n=1$): the Hilbert scheme $X^{[n]}$ is isomorphic to the $n$-th symmetric product $S^n X$ and we have

\begin{eqnarray}
X^{[n]} \simeq  S^n X.
\end{eqnarray}

\noindent A different situation is when some points collide. Looking at the case $n=2$, two types of ideals must be taken into account in $X^{[2]}$. One can either consider an ideal given by two distinct points $p_1$ and $p_2$, or the ideal

\begin{eqnarray}
I= \left\{ f \hspace{0.2cm} | \hspace{0.2cm} f(p)= 0, \hspace{0.2cm} df_P(v) = 0 \right\},
\end{eqnarray}
where $p$ is a point of $X$ and $v$ is a vector in the tangent space $T_p X$. The information of the direction in which $p_1$ approaches $p_2$ is remembered in this ideal. In the symmetric product, this information is lost and one just has $2p$. When $n >2$, more complicated ideals appear.

\subsection{The $n$-th symmetric product of $\mathbb{C}^2$}
In the spirit of the Hilbert scheme of points on surfaces briefly discussed above, we consider the case when $X= \mathbb{C}^2$. More precisely, we consider the family 

\begin{eqnarray}
S^n (\mathbb{C}^2) \simeq \mathbb{C} \left[ x_1,y_1; x_2,y_2; \ldots ; x_n,y_n \right] /S^n,
\end{eqnarray}
where $(x,y)$ are the coordinates of $\mathbb{C}^2$ and $S_n$ permutes the $n$-tuple of variables $(x_i,y_i)$. The computation of the Hilbert series of the invariant ring $S^n (\mathbb{C}^2)$ then amounts to extending Molien's Theorem to the bi-graded case. Such extension has already been studied \cite{forger1998invariant,traves2006differential}. We now show that in our case, $Z_{log} (\nu;q,\bar{q})$ is the generating function of a $\nu$-inserted bi-graded Molien series of the symmetric group and can be expressed as 

\begin{eqnarray}
Z_{log} (\nu;q,\bar{q}) \equiv Z_{log} (\nu,q,\bar{q}; \mathbb{C}^2)= \sum_{n=0} Z_n \left( q,\bar{q}; \mathbb{C}^2 \right) \left( \nu \right)^n.
\end{eqnarray}

\noindent We start by looking at the cases when $S_2$ and $S_3$ act on the standard basis of $\mathbb{C}^2$ and $\mathbb{C}^3$, respectively.

The symmetric group on two objects can be presented as $S_2 = < e, \sigma >$, where $e$ is the identity element and $\sigma$ can be expressed in cycle notation as $\sigma = (12)$. If we consider the action of $S_2$ on $2 \times 2$ matrices by permuting coordinates $q$, we have

\begin{eqnarray}
\begin{pmatrix} q & 0 \\ 0 & q \end{pmatrix} \xrightarrow[]{\pi (e)} \begin{pmatrix} q & 0 \\ 0 & q \end{pmatrix} \hspace{0.5cm} \mbox{and} \hspace{0.5cm} \begin{pmatrix} q & 0 \\ 0 & q \end{pmatrix} \xrightarrow[]{\pi (\sigma)} \begin{pmatrix} 0 & q \\ q & 0 \end{pmatrix} .
\end{eqnarray}

\noindent Then

\begin{subequations}
\begin{align}
\det \left( I - q \pi (e) \right) &= \det \left[ \begin{pmatrix} 1 & 0 \\ 0 & 1  \end{pmatrix} - \begin{pmatrix} q& 0 \\ 0 & q  \end{pmatrix} \right] \\ &= \det
\begin{pmatrix} 1-q & 0 \\ 0 & 1-q  \end{pmatrix} \\ &= \left( 1-q \right)^2, 
\end{align}
\end{subequations}
 
\noindent and

\begin{subequations}
\begin{align}
\det \left( I - q \pi (\sigma) \right) &= \det \left[ \begin{pmatrix} 1 & 0 \\ 0 & 1  \end{pmatrix} - \begin{pmatrix} 0 & q \\   q & 0 \end{pmatrix} \right] \\ &= \det
\begin{pmatrix} 1 & -q \\ -q & 1  \end{pmatrix} \\ &= \left( 1-q^2 \right). 
\end{align}
\end{subequations}

\noindent In the same way, the action of $S_2$ on $2 \times 2$ matrices by the permuting coordinates $\bar{q}$ allows us to write $\det \left( I - \bar{q} \pi (e) \right)=\left( 1-\bar{q} \right)^2$ and $\det \left( I - \bar{q} \pi (\sigma) \right)=\left( 1-\bar{q}^2 \right)$. From there, we have

\begin{subequations}
\begin{align}
Z_2\left( q, \bar{q}; \mathbb{C}^2 \right)  &= \frac{1}{2!} \left[ \mathcal{G}_1 \left( q, \bar{q} \right)^2 + \mathcal{G}_1 \left( q^2, \bar{q}^2 \right) \right] \\ &= \frac{1}{2!} \left[ \frac{1}{(1-q)^2(1-\bar{q})^2} + \frac{1}{(1-q^2)(1-\bar{q}^2)} \right] \\ &= \frac{1}{2!} \left[ \frac{1}{\det \left( I - q e \right) \det \left( I - \bar{q} e \right)} + \frac{1}{\det \left( I - q \sigma \right)  \det \left( I - \bar{q} \sigma \right)} \right].
\end{align}
\end{subequations}

Next we consider the symmetric group $S_3 = < e, \sigma, \tau >$, where the elements correspond respectively to the identity, the three-element conjugacy class that consists of swaping two coordinates, and the two-element conjugacy class of cyclic permutations. Using cycle notations, $\sigma = \left\{ (12),(13),(23) \right\}$ and $\tau = \left\{ (123),(132) \right\}$.
The action of $S_3$ on $3 \times 3$ matrices by permuting coordinates $q$ would yield the following. Starting from the identity

\begin{eqnarray}
\det
\begin{pmatrix}
1-q & 0 & 0 \\ 0 & 1-q & 0 \\ 0 & 0 & 1-q
\end{pmatrix}
= ( 1-q)^3.
\end{eqnarray}

\noindent Then, taking one of the three terms consisting of swaps of two coordinates, say $\sigma = (12)$ identically results into 

\begin{eqnarray}
\det
\begin{pmatrix}
1 & -q & 0 \\ -q & 1 & 0 \\ 0 & 0 & 1-q
\end{pmatrix}
= (1-q)( 1-q^2).
\end{eqnarray}

\noindent Finally, taking one of the two cyclic permutation terms, say $\tau = (132)$, identically yields

\begin{eqnarray}
\det
\begin{pmatrix}
1 & -q & 0 \\ 0 & 1 & -q \\ -q & 0 & 1
\end{pmatrix}
= (1-q^3).
\end{eqnarray}

\noindent Acting in the same way on coordinates $\bar{q}$ allows us to eventually write

\begin{subequations}
\begin{align}
& Z_3\left( q, \bar{q}; \mathbb{C}^2 \right) = \frac{1}{3!} \left[ \mathcal{G}_1 \left( q, \bar{q} \right)^3 + 3 \mathcal{G}_1 \left( q, \bar{q} \right) \mathcal{G}_1 \left( q^2, \bar{q}^2 \right) + 2 \mathcal{G}_1 \left( q^3, \bar{q}^3 \right) \right]\\ 
&= \frac{1}{3!} \left[  \frac{1}{(1-q)^3 (1-\bar{q})^3}  + \frac{3}{(1-q)(1-q^2)(1-\bar{q})(1-\bar{q}^2)} + \frac{2}{(1-q^3)(1-\bar{q}^3)}  \right] \\ &= \frac{1}{3!} \left[ \frac{1}{\det \left( I - q \pi(e) \right) \det \left( I - \bar{q} \pi (e) \right)} + \frac{3}{\det \left( I - q \pi (\sigma) \right) \det \left( I - \bar{q} \pi (\sigma) \right)} + \frac{2}{\det \left( I - q \pi (\tau) \right) \det \left( I - \bar{q} \pi (\tau) \right)}\right].
\end{align}
\end{subequations}

\noindent We can generalize the above procedure for all permutations $\pi$ of the elements $ g \in S_n$ as follows. We recall that the \textit{cycle type} of a permutation $\pi$ is the integer vector $m(\pi) = \left( m(1),m(2), \ldots, m(n) \right)$, where $m(k)$ are the multiplicities that count the number of cycles of length $k$ in the cycle decomposition of $\pi$. Then, applying the above discussion to $S_n$ acting on the space $\left( \mathbb{C} \times \mathbb{C} \right)^n$ by permuting the $q$- and $\bar{q}$-coordinates, if $\pi$ is the permutation of $S_n$ with cycle type $m(\pi) = \left( m(1),m(2), \ldots, m(n) \right)$, the standard bases of $\left( \mathbb{C} \times \mathbb{C} \right)^n$ decomposes into cycles of length $k$ such that

\begin{eqnarray}
\det \left(I_n - q \pi \right) \det \left(I_n - \bar{q} \pi \right) = \prod_{k=1}^n \left[ \left( 1-q^k \right) \left( 1- \bar{q}^k \right) \right]^{m(k)}.
\end{eqnarray}

\noindent As a result

\begin{eqnarray}
\frac{1}{\det \left(I_n - q \pi \right) \det \left(I_n - \bar{q} \pi \right)}= \prod_{k=1}^n \left[ \mathcal{G}_1 \left( q^k, \bar{q}^k \right) \right]^{m(k)}, 
\end{eqnarray}

\noindent and finally

\begin{eqnarray}
\label{Z as bigraded molien series}
Z_{log} (\nu;q,\bar{q}) \equiv Z_{log} (\nu,q,\bar{q}; \mathbb{C}^2) = \sum_{n=0}^{\infty} \frac{1}{\left| S_n \right|} \sum_{g \in S_n} \frac{(\nu)^n}{\det \left( I - q \pi (g)\right)  \det \left( I - \bar{q} \pi (g)\right)}.
\end{eqnarray}

\noindent  $Z_{log} (\nu,q,\bar{q}; \mathbb{C}^2)$ is therefore the generating function of a ($\nu$-inserted) bi-graded Molien series of $S_n$, and of Hilbert series of the ring of invariants $S^{n}(\mathbb{C}^2)$. 

Closer to our previous discussion on Hilbert schemes of points on surfaces, generating functions taking the exponential form of $Z_{log} (\nu,q,\bar{q}; \mathbb{C}^2)$ were considered in \cite{Nakajima:2003pg} and in \cite{Feng:2007ur}. Using our notation, we can then write 

\begin{eqnarray}
Z_{log} (\nu,q,\bar{q}; \mathbb{C}^2) = PE^{\mathcal{B}} \left[ \frac{1}{(1-q)(1-\bar{q})} \right] =  \exp \left( \sum^{\infty}_{n=1} \frac{\left( \nu \right)^n}{n (1-q^n)(1-\bar{q}^n)}   \right).
\end{eqnarray}

\noindent In the simplest case of CCTMG for instance, $i.e$ when $\nu= q^2$, one can write 

\begin{eqnarray}
\label{Z log for CCTMG}
Z_{log} (q^2;q,\bar{q}; \mathbb{C}^2) = PE^{\mathcal{B}} \left[ \frac{1}{(1-q)(1-\bar{q})} \right] =  \exp \left( \sum^{\infty}_{n=1} \frac{\left( q^2 \right)^n}{n (1-q^n)(1-\bar{q}^n)}   \right).
\end{eqnarray}

\noindent $S^n (\mathbb{C}^2)$ is an orbifold locally isomorphic to an open set of the Euclidean space quotiented by the action of the symmetric group \cite{nakajimaquiver,Nakajima:2003pg}. The above analysis therefore brings forth the orbifold structure of the moduli space of logarithmic partners states in critical massive gravities. 


\subsection{Differential operators on orbifolds}
\label{HW algebra subsection}
In this section, using an invariant theoretic language, we revisit some of the work done in \cite{Mvondo-She:2018htn}, and give an interpretation to the hidden structures found in the study of $Z_{log} (\nu;q,\bar{q}; \mathbb{C}^2)$.

Fock spaces were designed as an algebraic framework to construct many-particle states in quantum mechanics. They typically represent the state space of an indefinite number of identical particles (an electron gas, photons, etc...). These particles can be classified in two types, bosons and fermions, and their Fock spaces look quite different. The reason why a Fock space are of great interest is that several important algebras can naturally act on it. Fermionic Fock spaces are naturally representations of a Clifford algebra, where the generators correspond to adding or removing a particle in a given energy state. In a similar way, bosonic Fock space is naturally a representation of a Weyl algebra. 

Returning to the discussion on Hilbert schemes, for the non-compact space $\mathbb{C}^2$, a connection between the theory of Hilbert schemes of points on surfaces and the infinite dimensional Heisenberg algebra was made through the construction of a representation of the Heisenberg algebra on the homology group of the Hilbert scheme, turning the homology group into a Fock space \cite{nakajima1997heisenberg}. The construction showed that the Fock space representation on the polynomial ring of infinitely many variables is an important representation of the Heisenberg algebra. In the present case, with an interest on the bosonic Fock space, we discuss the construction of a combinatorial model of creation and annihilation operators that are generators of a Heisenberg-Weyl algebra and that act on the bosonic Fock space. 

In general, a Fock space is considered on a Hilbert space, but in the simplest case and for the purpose of our discussion, the bosonic vector space is obtained by considering a complex vector space $\mathbb{C}$. Then, the bosonic Fock space as a vector space is essentially a space of polynomials of infinitely many variables. A typical basis can be constructed using Schur symmetric functions. In our case, we consider a space of Bell polynomials $Y_n$ of infinitely many variables $ g_1, g_2, \cdots, g_n$. 

Given the $n$-dimensional polynomial ring $\mathbb{C}[x_1, \ldots, x_n]$, a subring of invariant polynomials denoted $\mathbb{C}[x_1, \ldots, x_n]^{S_n}$ can be constructed from invariants $g_1, \ldots, g_n \in \mathbb{C}[x_1, \ldots, x_n]$. The polynomials of the invariant ring $\mathbb{C}[x_1, \ldots, x_n]^{S_n}$ can take the form of Bell polynomials $Y_n$ with coordinates $g_1, \ldots, g_n$  such that $\mathbb{C}[x_1, \ldots, x_n]^{S_n} \simeq\mathbb{C}[g_1, \ldots, g_n]$. From the ring of differential operators

\begin{eqnarray}
D\left( \mathbb{C}[g_1, \ldots, g_n] \right) = \mathbb{C}<g_1, \ldots, g_n, \partial_{g_1}, \ldots, \partial_{g_n}>,
\end{eqnarray}
we construct the multiplication operator $\hat{X}=g_1 + \sum^{\infty}_{k=1} g_{k+1} \frac{\partial}{\partial g_k}$ and the derivative operator $\hat{D}=\frac{\partial}{\partial g_1}$ such that they satisfy the Heisenberg-Weyl algebra $\left[ \hat{X}, \hat{D} \right]=1$. Then, defining the Hilbert series 

\begin{eqnarray}
 Z \left( \mathbb{C}[x_1, \ldots, x_n]^{S_n} \right) \equiv  Z\left( S^{n}(\mathbb{C}) \right) = \frac{Y_n(g_1, \ldots, g_n)}{n!},
\end{eqnarray}

\noindent these operators act as ladder operators on the Hilbert series $Z \left( \mathbb{C}[x_1, \ldots, x_n]^{S_n} \right)$ at each $n$ level in the following way.

\begin{proposition}
\label{prop}
Let  $Z\left( S^{n}(\mathbb{C}) \right)$ be defined in terms of Bell polynomials $Y_n$ as

\begin{eqnarray}
Z\left( S^{n}(\mathbb{C}) \right) = \frac{Y_n(g_1, \ldots, g_n)}{n!}.
\end{eqnarray}

\noindent The set of operators

\begin{eqnarray}
\hat{X}=g_1 + \sum^{\infty}_{k=1} g_{k+1} \frac{\partial}{\partial g_k}; \hspace{1cm}
\hat{D}=\frac{\partial}{\partial g_1},\nonumber 
\end{eqnarray}
generating the Heisenberg-Weyl algebra $[\hat{D},\hat{X}]=1$, $[\hat{D},1]=[\hat{X},1]=0$ acts on $Z\left( S^{n}(\mathbb{C}) \right)$ as

\begin{subequations}
\begin{align}
\hat{X} Z\left( S^{n}(\mathbb{C}) \right) &= (n+1) Z\left( S^{n+1}(\mathbb{C}) \right)  \\ \hat{D} Z\left( S^{n}(\mathbb{C}) \right) &= Z\left( S^{n-1}(\mathbb{C}) \right)  \\
\hat{X} \hat{D} Z\left( S^{n}(\mathbb{C}) \right) &= n Z\left( S^{n}(\mathbb{C}) \right).
\end{align}
\end{subequations}
\end{proposition}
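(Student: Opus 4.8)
The plan is to reduce everything to two standard identities for the complete Bell polynomials, both read off from the exponential generating function $\sum_{n\geq 0} Y_n(g_1,\dots,g_n)\,t^n/n! = \exp\!\big(\sum_{j\geq 1} g_j\, t^j/j!\big)$. Differentiating this generating function in $g_k$ multiplies the right-hand side by $t^k/k!$; matching coefficients of $t^n$ then yields the derivative rule
\begin{eqnarray}
\frac{\partial Y_n}{\partial g_k} = \binom{n}{k}\, Y_{n-k},
\end{eqnarray}
with the convention $\partial Y_n/\partial g_k = 0$ for $k>n$. Differentiating instead in $t$ gives the Bell recurrence
\begin{eqnarray}
Y_{n+1} = \sum_{k=0}^{n} \binom{n}{k}\, Y_{n-k}\, g_{k+1}.
\end{eqnarray}
These two facts are all I expect to need.

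The lowering relation is then immediate: applying $\hat D=\partial/\partial g_1$ to $Z(S^n(\mathbb{C}))=Y_n/n!$ and using the derivative rule with $k=1$ gives $\hat D\,(Y_n/n!) = \binom{n}{1}Y_{n-1}/n! = Y_{n-1}/(n-1)! = Z(S^{n-1}(\mathbb{C}))$. For the raising relation I would compute $\hat X_n Y_n$ directly. Substituting the derivative rule into $\hat X_n = g_1 + \sum_{k=1}^n g_{k+1}\,\partial_{g_k}$ gives $\hat X_n Y_n = g_1 Y_n + \sum_{k=1}^n \binom{n}{k} Y_{n-k}\, g_{k+1}$, which is precisely the $k=0$ term together with the $k\geq 1$ terms of the Bell recurrence; hence $\hat X_n Y_n = Y_{n+1}$. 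Dividing by $n!$ and rewriting $1/n! = (n+1)/(n+1)!$ yields $\hat X_n Z(S^n(\mathbb{C})) = (n+1)\, Z(S^{n+1}(\mathbb{C}))$.

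Finally, the third relation follows by composing the two, but here lies the one genuine subtlety, which I expect to be the main point requiring care: the operator $\hat X_n$ carries a \emph{fixed} index $n$, so after $\hat D$ lowers the degree I must apply the \emph{same} $\hat X_n$ to $Y_{n-1}$, not $\hat X_{n-1}$. The resolution is that the extra summand, the $k=n$ term $g_{n+1}\,\partial_{g_n} Y_{n-1}$, vanishes because $Y_{n-1}$ does not depend on $g_n$. More generally $\hat X_n Y_m = Y_{m+1}$ for every $m \leq n$: the terms with $k>m$ drop out and what survives is exactly the recurrence reconstructing $Y_{m+1}$. Taking $m=n-1$ then gives
\begin{eqnarray}
\hat X_n \hat D\, Z(S^n(\mathbb{C})) = \hat X_n\, \frac{Y_{n-1}}{(n-1)!} = \frac{Y_n}{(n-1)!} = n\, \frac{Y_n}{n!} = n\, Z(S^n(\mathbb{C})).
\end{eqnarray}
Everything apart from this index bookkeeping is a direct substitution; as a consistency check one can verify $[\hat D,\hat X_n]=1$, which reduces to $[\partial_{g_1},g_1]=1$ since $\partial_{g_1}$ commutes with every $g_{k+1}$ and $\partial_{g_k}$ for $k\geq 1$.
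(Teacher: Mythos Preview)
Your argument is correct and in fact more complete than the paper's. The paper simply quotes from \cite{Mvondo-She:2018htn} the three identities $\hat X_n Y_n = Y_{n+1}$, $\hat D Y_n = n Y_{n-1}$, and $\hat X_n \hat D Y_n = n Y_n$, and then divides each through by $n!$; you instead derive these identities from scratch via the exponential generating function, obtaining the derivative rule $\partial Y_n/\partial g_k = \binom{n}{k} Y_{n-k}$ and the Bell recurrence, and then assembling the operator actions from those. Your route is more self-contained and, in particular, you make explicit a point the paper does not comment on: that in the composite $\hat X_n \hat D$ one applies the \emph{same} $\hat X_n$ to the lowered polynomial $Y_{n-1}$, and this still produces $Y_n$ because the surplus term $g_{n+1}\,\partial_{g_n} Y_{n-1}$ vanishes. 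What the paper's approach buys is brevity by outsourcing the Bell-polynomial identities to a reference; what yours buys is a transparent, generating-function proof that also clarifies why the fixed upper index on $\hat X_n$ causes no trouble.
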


\begin{proof}
From \cite{Mvondo-She:2018htn}, it is known that

\begin{subequations}
\begin{align}
\hat{X} Y_n &= Y_{n+1},  \\
\hat{D} Y_n &= n Y_{n-1},  \\
\hat{X} \hat{D} Y_n &= n Y_n. 
\end{align}
\end{subequations}

\noindent Hence we can write

\begin{subequations}
\begin{align}
\hat{X} \left[ n! Z\left( S^{n}(\mathbb{C}) \right) \right] &= (n+1)! Z\left( S^{n+1}(\mathbb{C}) \right)  \\  \Rightarrow \hspace{0.5cm} \hat{X} \left( Z\left( S^{n}(\mathbb{C}) \right) \right) &= (n+1) Z\left( S^{n+1}(\mathbb{C}) \right). 
\end{align}
\end{subequations}

\noindent Similarly

\begin{subequations}
\begin{align}
\hat{D} \left[ n! Z\left( S^{n}(\mathbb{C}) \right) \right] &= n \left[ (n-1)!  Z\left( S^{n-1}(\mathbb{C}) \right) \right]  \\ \Rightarrow \hspace{0.5cm}
\hat{D} \left( Z\left( S^{n}(\mathbb{C}) \right) \right) &= Z\left( S^{n-1}(\mathbb{C}) \right),  
\end{align}
\end{subequations}

\noindent and

\begin{subequations}
\begin{align}
\hat{X} \hat{D} \left[  Z\left( S^{n}(\mathbb{C}) \right) \right] &= n \left[  Z\left( S^{n}(\mathbb{C}) \right) \right]  \\ \Rightarrow \hspace{0.5cm} \hat{X} \hat{D} Z\left( S^{n}(\mathbb{C}) \right) &= n Z\left( S^{n}(\mathbb{C}) \right).
\end{align}
\end{subequations}
\end{proof}

\section{Moduli space of logarithmic states and Calabi-Yau orbifolds}
In the previous section, we used insights from the plethystic program and Molien's theorem to show that the moduli space of the logarithmic states is given by the symmetric product $S^{n}(\mathbb{C}^2)$. In this section, we take a further geometric perspective, by establishing that the moduli space is a Calabi-Yau space. 

The context of the moduli space of the logarithmic states is one in which as seen earlier, the Bell polynomials with coordinates $g_1, g_2, \ldots, g_n$ form a ring on the space $S^{n}(\mathbb{C}^2)$. The Hilbert series of the polynomial ring has a very interesting palindromic property, which can be stated as follows.

\begin{proposition}
Let the generating function 

\begin{eqnarray}
Z_{log}\left( \nu;q,\bar{q};\mathbb{C}^2 \right) = \sum_{n=0}^{\infty} Z_n \left( q,\bar{q};\mathbb{C}^2 \right) (\nu)^n.
\end{eqnarray}

\noindent The numerator of the Hilbert series $Z_n \left( q,\bar{q};\mathbb{C}^2 \right)$ is palindromic, i.e it can be written in the form of a degree $m,\bar{m}$ polynomial in $q,\bar{q}$

\begin{eqnarray}
P_{m,\bar{m}}( q, \bar{q})= \sum_{k=0}^{m} \sum_{\bar{k}=0}^{\bar{m}} a_{k,\bar{k}} q^k \bar{q}^{\bar{k}} ,
\end{eqnarray}

\noindent with symmetric coefficients $a_{m-k,\bar{m}-\bar{k}}=a_{k,\bar{k}}$, and $P_{m,\bar{m}}( 1, 1) \neq 0$.
\end{proposition}

\begin{proof}
We make use of a theorem by Stanley \cite{stanley1978hilbert} to show that if the numerator of the Hilbert series $Z_n \left( q,\bar{q};\mathbb{C}^2 \right)$ is palindromic, then the Hilbert series $Z_n \left( q,\bar{q};\mathbb{C}^2 \right)$ enjoys the following transformation property

\begin{eqnarray}
Z_n \left( \frac{1}{q}, \frac{1}{\bar{q}}  \right) = \left(q \bar{q} \right)^n Z_n \left( q,\bar{q} \right).
\end{eqnarray}

\noindent In order to show that the above equation holds, given that $Z_n \left( q,\bar{q};\mathbb{C}^2 \right)= \frac{Y_n}{n!}$ it suffices to show that by introducing the variable 

\begin{eqnarray}
\tilde{g}_n= g \left( \frac{1}{q^n}, \frac{1}{\bar{q}^n} \right),
\end{eqnarray}

\noindent then 

\begin{eqnarray}
\label{transformation of Bell polynomial}
Y_n \left( \tilde{g}_1, \tilde{g}_2, \ldots, \tilde{g}_n   \right) = \left(q \bar{q} \right)^n Y_n \left( g_1, g_2, \ldots, g_n \right).  
\end{eqnarray}

\noindent We start by writing $\tilde{g}_n$ in terms of $g_n$

\begin{eqnarray}
\label{transformation of g}
\tilde{g}_n = g \left( \frac{1}{q^n}, \frac{1}{\bar{q}^n} \right) = \frac{1}{1-\frac{1}{q^n}} \frac{1}{1- \frac{1}{\bar{q}^n}} = \left( q \bar{q} \right)^n g (q^n,\bar{q}^n) = \left( q \bar{q} \right)^n g_n.
\end{eqnarray}

\noindent Next, we use the determinantal form of the Bell polynomial \cite{johnson2002curious} 

\begin{eqnarray}
Y_n(g_1, \ldots, g_n) =
\det \begin{pmatrix}
\begin{pmatrix} n-1 \\ 0 \end{pmatrix} g_1 & \begin{pmatrix} n-1 \\ 1 \end{pmatrix} g_2 & \begin{pmatrix} n-1 \\ 2\end{pmatrix} g_3 & \cdots & \begin{pmatrix} n-1 \\ n-2 \end{pmatrix} g_{n-1} & \begin{pmatrix} n-1 \\ n-1 \end{pmatrix} g_n \\
-1 & \begin{pmatrix} n-2 \\ 0 \end{pmatrix} g_1 & \begin{pmatrix} n-2 \\ 1 \end{pmatrix} g_2 & \cdots & \begin{pmatrix} n-2 \\ n-3 \end{pmatrix} g_{n-2} & \begin{pmatrix} n-2 \\ n-2 \end{pmatrix} g_{n-1} \\ 
0 & -1 & \begin{pmatrix} n-3 \\ 0 \end{pmatrix} g_1 & \cdots & \begin{pmatrix} n-3 \\ n-4  \end{pmatrix} g_{n-3} & \begin{pmatrix} n-3 \\ n-3 \end{pmatrix} g_{n-2} \\
\vdots & \vdots & \vdots & & \vdots & \vdots \\
0 & 0 & 0 & \cdots & \begin{pmatrix} 1 \\ 0 \end{pmatrix} g_1 & \begin{pmatrix} 1 \\ 1 \end{pmatrix} g_2 \\
0 & 0 & 0 & \cdots & -1 & \begin{pmatrix} 0 \\ 0 \end{pmatrix} g_1
\end{pmatrix},
\end{eqnarray}

\noindent and prove Eq. (\ref{transformation of Bell polynomial}) by induction. \\ \\
For $n=1$, using Eq. (\ref{transformation of g}) we write

\begin{eqnarray}
Y_1 \left( \tilde{g}_1 \right) = \tilde{g}_1 =  \left( q \bar{q} \right)^1 g_1 = \left( q \bar{q} \right)^1 Y_1, 
\end{eqnarray}

\noindent showing that Eq. (\ref{transformation of Bell polynomial}) holds at level one. \\ \\ \noindent We assume that the identity holds at any level $n$, and show that if it holds at level $n$, it also holds at level $n+1$. \\ \\ \noindent
The determinantal form of the Bell polynomials at level $n+1$ can be expressed as \cite{johnson2002curious}

\begin{eqnarray}
Y_{n+1}(g_1, \ldots, g_{n+1}) =
\det \begin{pmatrix}
\begin{pmatrix} n \\ 0 \end{pmatrix} g_1 & \begin{pmatrix} n \\ 1 \end{pmatrix} g_2 & \begin{pmatrix} n \\ 2\end{pmatrix} g_3 & \cdots & \begin{pmatrix} n \\ n-1 \end{pmatrix} g_{n} & \begin{pmatrix} n \\ n \end{pmatrix} g_{n+1} \\
-1 & \begin{pmatrix} n-1 \\ 0 \end{pmatrix} g_1 & \begin{pmatrix} n-1 \\ 1 \end{pmatrix} g_2 & \cdots & \begin{pmatrix} n-1 \\ n-2 \end{pmatrix} g_{n-1} & \begin{pmatrix} n-1 \\ n-1 \end{pmatrix} g_{n} \\ 
0 & -1 & \begin{pmatrix} n-2 \\ 0 \end{pmatrix} g_1 & \cdots & \begin{pmatrix} n-2 \\ n-3  \end{pmatrix} g_{n-2} & \begin{pmatrix} n-2 \\ n-2 \end{pmatrix} g_{n-1} \\
\vdots & \vdots & \vdots & & \vdots & \vdots \\
0 & 0 & 0 & \cdots & \begin{pmatrix} 1 \\ 0 \end{pmatrix} g_1 & \begin{pmatrix} 1 \\ 1 \end{pmatrix} g_2 \\
0 & 0 & 0 & \cdots & -1 & \begin{pmatrix} 0 \\ 0 \end{pmatrix} g_1
\end{pmatrix},
\end{eqnarray}

\noindent and from the determinantal forms of the Bell polynomials at levels $n$ and $n+1$, it is possible to extract the recurrence relation

\begin{eqnarray}
\label{recurrence relation}
Y_{n+1} = g_1 Y_n + \sum_{i=1}^{n} \begin{pmatrix} n \\ i \end{pmatrix} g_{i+1} Y_{n-i}. 
\end{eqnarray}

\noindent Appendix \ref{appendix determinantal form} expands the recurrence relation given in Eq. (\ref{recurrence relation}) up to level 4, showing \textit{en passant} that Eq. (\ref{transformation of Bell polynomial}) holds at those levels. Now for simplicity, using the expression $\tilde{Y}_n= Y\left( \tilde{g}_1, \tilde{g}_2, \ldots, \tilde{g}_n \right)$ while keeping the standard notation $Y_n = Y \left( g_1,g_2, \ldots, g_n \right)$, we obtain

\begin{subequations}
\begin{align}
\tilde{Y}_{n+1} &= \tilde{g}_1 \tilde{Y}_n + \sum_{i=1}^{n} \begin{pmatrix} n \\ i \end{pmatrix} \tilde{g}_{i+1} \tilde{Y}_{n-i} \\ 
&= \left[ \left( q \bar{q} \right) g_1 \right] \left[ \left( q \bar{q} \right)^n Y_n \right] + \sum_{i=1}^{n} \begin{pmatrix} n \\ i \end{pmatrix} \left[ \left( q \bar{q} \right)^{i+1} g_{i+1} \right] \left[ \left( q \bar{q} \right)^{n-i} Y_{n-i} \right] \\
&= \left( q \bar{q} \right)^{n+1} \left[ g_1 Y_n \right] + \sum_{i=1}^{n} \begin{pmatrix} n \\ i \end{pmatrix} \left( q \bar{q} \right)^{i+1+n-i} \left[ g_{i+1} Y_{n-i} \right] \\
&= \left( q \bar{q} \right)^{n+1} \left[ g_1 Y_n \right] + \sum_{i=1}^{n} \begin{pmatrix} n \\ i \end{pmatrix} \left( q \bar{q} \right)^{n+1} \left[ g_{i+1} Y_{n-i} \right] \\
&= \left( q \bar{q} \right)^{n+1} \left[ g_1 Y_n \right] + \left( q \bar{q} \right)^{n+1} \sum_{i=1}^{n} \begin{pmatrix} n \\ i \end{pmatrix}  \left[ g_{i+1} Y_{n-i} \right] \\
&= \left( q \bar{q} \right)^{n+1} \left[ g_1 Y_n  +  \sum_{i=1}^{n} \begin{pmatrix} n \\ i \end{pmatrix}  g_{i+1} Y_{n-i} \right] \\
&= \left( q \bar{q} \right)^{n+1} Y_{n+1}.
\end{align}
\end{subequations}
\end{proof}

After proving that the numerator of the Hilbert series $Z_n \left( q,\bar{q};\mathbb{C}^2 \right)$ is palindromic, we make a more precise statement about the Hilbert series by writing its general formula as 

\begin{eqnarray}
\label{palindromic Zn}
Z_n \left( q,\bar{q};\mathbb{C}^2 \right) = \left[ \prod_{i=1}^n \mathcal{G}_1 \left( q^i,\bar{q}^i \right) \right] P_{m,\bar{m}}( q, \bar{q}) = \frac{P_{m,\bar{m}}( q, \bar{q})}{\prod\limits_{i=1}^n \left( 1-q^i \right) \left( 1-\bar{q}^i \right)} = \frac{\sum\limits_{k=0}^{m} \sum\limits_{\bar{k}=0}^{\bar{m}} a_{k,\bar{k}} q^k \bar{q}^{\bar{k}}}{\prod\limits_{i=1}^n \left( 1-q^i \right) \left( 1-\bar{q}^i \right)}. 
\end{eqnarray}

Appendix \ref{appendix palindromic} gives the expressions of the Hilbert series as in Eq. (\ref{palindromic Zn}) up to order four, showing that at each of those levels, the numerators are indeed palindromic, while the denominators are formed from products of the $\mathcal{G}_1 \left( q^i,\bar{q}^i \right)$ functions. These results allow us to extract an interesting geometric property of the moduli space. Due to the Hochster-Roberts theorem \cite{hochster1974rings} in commutative algebra, the coordinate ring giving the description of the moduli space is Cohen-Macaulay. Indeed, the fundamental theorem of Hochster and Roberts asserts that the invariant ring of a reductive group, $i.e$ a type of linear algebraic group over a field, is Cohen-Macaulay. Among the reductive groups is the general linear group $GL_n$ of invertible matrices. A maximal torus of $GL_n$ is the subgroup of invertible diagonal matrices, whose normalizer is the generalized permutation matrices. The quotient of the normalizer of a maximal torus by the torus is called the Weyl group of a reductive group. In the case of $GL_n$, the Weyl group is the symmetric group $S_n$. Since $S_n$ may be considered as a subgroup of $GL_n$, $S_n$ is a reductive group and the invariant ring $S^n \left( \mathbb{C}^2  \right)$ is therefore Cohen-Macaulay. Furthermore, Stanley's theorem \cite{stanley1978hilbert} states that a Cohen-Macaulay ring that has a palindromic Hilbert series is a Gorenstein ring. The numerator of the Hilbert series $Z_n \left( q,\bar{q};\mathbb{C}^2 \right)$ proved to be palindromic shows that the invariant ring $S^n \left( \mathbb{C}^2  \right)$ is Gorenstein. The importance of this statement is that in the case of affine spaces, Gorenstein means Calabi-Yau. Since our rings are defined over an affine space, we reach the important conclusion that the moduli space is in fact an affine Calabi-Yau orbifold space. In other words, the log sector of the theory lives in the geometry of the $n$-symmetric product of a generic non-compact Calabi-Yau manifold, the two-fold $\mathbb{C}^2$ which has two complex dimensions. The Calabi-Yau orbifold constructed from the gauging of the smooth complex space $\mathbb{C}^2$ by the symmetric group $S_n$ is a conical singular space, as it contains a singularity at the branch point $\mu l=1$, the fixed point of the set of elements of $S_n$ acting on $\mathbb{C}^2$. The $n$-symmetric product of surface $\mathbb{C}^2$ thus constructed plays an important role in understanding the structure of the logarithmic sector of critical massive gravity theories in terms of holomorphic (Riemann) surfaces parametrized in the Calabi-Yau manifold as (covering maps of) $\mathbb{CP}^1$ spaces.

\section{Discussion}
In the sections above, we delved quite intensively into algebro-geometric and invariant theoretic issues that had not yet been addressed in the context of critical massive gravities present in the AdS$_3$/LCFT$_2$ correspondence. In this section, we come to the point where we would like to make some final comments about the logarithmic states and their space geometries. 

Firstly, we would like to mention the following about the "characters" generated by the log-partition $Z_{log} (\nu,q,\bar{q}; \mathbb{C}^2)$. From the above analysis, these objects are linear combinations of the variables $g_1, g_2, \ldots, g_n$ multiplied by a factor $\left( \nu \right)^n$. However, $Z_{log} (\nu,q,\bar{q}; \mathbb{C}^2)$ also seem to describe the symmetric tensor product of the characters of $sl(2,\mathbb{R})$ highest weight representations

\begin{eqnarray}
\label{sl2 characters}
\chi^{sl(2,\mathbb{R})}_h = \frac{q^h}{1-q}, \hspace{1cm} \bar{\chi}^{sl(2,\mathbb{R})}_{\bar{h}} = \frac{\bar{q}^{\bar{h}}}{1-\bar{q}}.
\end{eqnarray}
Taking $h=2,\bar{h}=0$ for instance, it is easy to see that $\chi^{sl(2,\mathbb{R})}_{h=2}$ and $\bar{\chi}^{sl(2,\mathbb{R})}_{\bar{h}=0}$ are respectively the single particle holomorphic and antiholomorphic characters from which $Z_{log} (q^2,q,\bar{q}; \mathbb{C}^2)$ yields symmetric tensor product (multiparticle) characters. Yet, if really there were a holomorphic and an antiholomorphic character, the numerator of the antiholomorphic character of CCTMG should not be constant since the antiholomorphic central charge is not nul. In other words, there should systematically be a $(\bar{\nu})$ appear in the logarithmic partition functions. This argument show that the logarithmic highest weight states have descendants that are both holomorphic and antiholomorphic, just as in the $c=0$ non-unitary LCFT theory \cite{Ridout:2012ew}.

In the extension of holography to the present non-unitary case, critical massive gravities present in the AdS$_3$/LCFT$_2$ correspondence are considered as non-unitary AdS$_3$ holographic duals of two dimensional non-unitary CFTs that are known to exist. We would like to argue that in this setting, the logarithmic states of the critical massive gravities are points on affine Calabi-Yau cones. The idea that conical spaces could play a role in non-unitary holography was recently mentioned in \cite{Raeymaekers:2014kea}, and from the present work, it is quite natural to interpret the conical spaces as affine Calabi-Yau cones. 


\section{Conclusion and outlook}
In this work, we used the partition function derived in \cite{Gaberdiel:2010xv} and reformulated in \cite{Mvondo-She:2018htn} to extract information about the moduli space of the logarithmic sector in critical massive gravities. Using the relation between the cycle index of the symmetric group and Bell polynomials, we first showed that the partition function of the logarithmic states is the generating function of polynomials counting objects invariant under the action of the symmetric group $S_n$. We also showed that the partition function could be expressed as a generating function of 1-part Schur polynomials. It was then shown that the configuration space of the logarithmic states is the symmetric product $S^n(\mathbb{C}^2)$, by showing that the partition function of the logarithmic states $Z_{log} (\nu,q,\bar{q})$ is the generating function of a bi-graded Molien series. As the quotient of an affine space by the reductive group $S_n$, the space $S^n(\mathbb{C}^2)$ has the structure of an orbifold, and $Z_{log} (\nu,q,\bar{q}; \mathbb{C}^2)$ is then the generating function of an $S^n(\mathbb{C}^2)$ orbifold space. Then, the construction of differential operators on orbifolds was discussed. The ring of polynomials on the symmetric product was shown to correspond to bosonic wavefunctions of an $n$-particle system on $\mathbb{C}^2$, that can be mapped tp a Fock space obtained by acting on the Hilbert series with the generators of a Heisenberg-Weyl algebra. Finally, it was proved that the Hilbert series of the polynomial ring have interesting palindromic properties, indicating that the moduli space of the logarithmic states is a Calabi-Yau singular space. Based on this work, we argued that the logarithmic states of the critical massive gravities are points on affine Calabi-Yau cones. 

On its own right, the logarithmic sector of these critical massive gravities looks like an interesting topic of study, and much more seemingly remains to be unraveled from it. For example, it would be interesting to study the modular properties of $Z_{log} (\nu,q,\bar{q}; \mathbb{C}^2)$, and how the study contributes to the modular properties of $Z_{\rm{TMG}}$. This is the object of current investigation. It was also observed that the counting in the multi particle sector of $Z_{log} (\nu,q,\bar{q}; \mathbb{C}^2)$ can only be done correctly on the account of quantum groupoid (Hopf algebroid) coproducts \cite{yannick}. This matter will be discussed in an upcoming publication. 

Lastly, the author would like to mention that the realization that the partition functions of critical massive gravities could be recast in terms of Bell polynomials must be credited to  \cite{Bytsenko:2017byd,Bytsenko:2017czh}. Therefore, inspired by ideas from \cite{Bytsenko:2017byd,Bytsenko:2017czh,Bonora:2016nqc,Bytsenko:2016msy} where partition functions expressed in terms of Bell polynomials and recast into infinite products eventually lead to the construction of quantum group, knots and link invariants, it would be interesting to investigate how  $Z_{log} (\nu,q,\bar{q}; \mathbb{C}^2)$ can be useful in the construction of topological invariants.

\paragraph{Acknowledgements}

Y. M-S is grateful to A. Hanany for communication regarding the plethystic exponential, and Robert de Mello Koch for useful discussions on representation theoretic aspects of this work. He would also like to thank K. Zoubos for proofreading and commenting on earlier versions of this work. 
\clearpage

\begin{appendices}
\numberwithin{equation}{section}
\section{Derivations of plethystic exponential and Bell polynomial forms of $Z_{log} (\nu,q,\bar{q}; \mathbb{C}^2)$ }
In this appendix, we recall the derivation of $Z_{log} (\nu,q,\bar{q}; \mathbb{C}^2)$ as a plethystic exponential and in a Bell polynomial form \cite{yannick}.

Starting from

\begin{eqnarray}
Z_{log} (\nu,q,\bar{q}; \mathbb{C}^2) = \prod_{m=0}^{\infty} \prod_{\bar{m}=0}^{\infty} \frac{1}{1- \nu q^m \bar{q}^{\bar{m}}},
\end{eqnarray}
we first compute the logarithmic function of $Z_{log} (\nu,q,\bar{q}; \mathbb{C}^2)$ as 

\begin{eqnarray}
\log \left[ Z_{log} (\nu,q,\bar{q}; \mathbb{C}^2) \right] = - \sum_{m,\bar{m}\geq 0} \log (1- \nu q^m \bar{q}^{\bar{m}}).
\end{eqnarray}
Using the well know Maclaurin series

\begin{eqnarray}
\log ( 1 - x) = - \sum_{x=1}^{\infty} \frac{x^n}{n},
\end{eqnarray}
we write

\begin{eqnarray}
\log \left[ Z_{log} (\nu,q,\bar{q}; \mathbb{C}^2) \right] = \sum_{m,\bar{m}\geq 0} \sum_{n=1}^{\infty} \frac{\nu ^n}{n} q^{nm} \bar{q}^{n \bar{m}}. 
\end{eqnarray}
Then, using the Maclaurin series of geometric series

\begin{eqnarray}
\frac{1}{1-x} = \sum_{x=0}^{\infty} x^n,
\end{eqnarray}
we write 

\begin{subequations}
\begin{align}
\log \left[ Z_{log} (\nu,q,\bar{q}; \mathbb{C}^2) \right] &= \sum_{m,\bar{m}\geq 0} \sum_{n=1}^{\infty} \frac{\nu ^n}{n} q^{nm} \bar{q}^{n \bar{m}} \\ &= \sum_{n=1}^{\infty} \frac{\nu ^n}{n} \sum_{m,\bar{m}\geq 0} q^{nm} \bar{q}^{n \bar{m}} \\ \label{PE to BP} &= \sum_{n=1}^{\infty} \frac{\nu ^n}{n} \frac{1}{1-q^n} \frac{1}{1-\bar{q}^n}.
\end{align}
\end{subequations}
Finally, exponentiating the above equation yields the plethystic exponential form of $Z_{log} (\nu,q,\bar{q}; \mathbb{C}^2)$

\begin{eqnarray}
Z_{log} (\nu,q,\bar{q}; \mathbb{C}^2) = PE^{\mathcal{B}} \left[ \frac{1}{(1-q)(1-\bar{q})} \right] =  \exp \left( \sum^{\infty}_{n=1} \frac{\left( \nu \right)^n}{n (1-q^n)(1-\bar{q}^n)}   \right).
\end{eqnarray}

To obtain the Bell polynomial version of $Z_{log} (\nu,q,\bar{q}; \mathbb{C}^2)$, it suffices to continue from Eq. (\ref{PE to BP}) as follows

\begin{subequations}
\begin{align}
\log \left[ Z_{log} (\nu,q,\bar{q}; \mathbb{C}^2) \right]  &= \sum_{n=1}^{\infty} \frac{\nu ^n}{n} \frac{1}{1-q^n} \frac{1}{1-\bar{q}^n}   \\ &= \sum_{n=1}^{\infty} \frac{\nu ^n}{n} \frac{1}{\left|1-q^n \right|} \\ &= \sum_{n=1}^{\infty} \frac{\nu ^n}{n!} (n-1)! \frac{1}{\left|1-q^n \right|}.
\end{align}
\end{subequations}
Introducing the function $g_n$ such that 

\begin{eqnarray}
g_n = (n-1)! \frac{1}{\left|1-q^n \right|},
\end{eqnarray}
we get

\begin{eqnarray}
\log \left[ Z_{log} (\nu,q,\bar{q}; \mathbb{C}^2) \right] = \sum_{n=1}^{\infty} \frac{\nu ^n}{n!} g_n.
\end{eqnarray}
Finally, exponentiation the above logarithmic function gives

\begin{subequations}
\begin{align}
Z_{log} (\nu,q,\bar{q}; \mathbb{C}^2) &= \exp \left( \sum_{n=1}^{\infty} \frac{\nu ^n}{n!} g_n   \right) \\ &= \label{BP appendix} \sum_{n=0}^{\infty} \frac{Y_n}{n!} \nu^{n},
\end{align}
\end{subequations}
where in Eq. (\ref{BP appendix}), $Z_{log} (\nu,q,\bar{q}; \mathbb{C}^2)$ is the generating function of the Bell polynomials $Y_n$.

\clearpage

\section{Cycle index of the symmetric group}
\label{appendix cycle index}
In this appendix, the notion of cycle index applied to the symmetric group is reviewed.

\begin{definition}
The symmetric group  $S_n$ defined over a finite set $X$ of $n$ objects, is the group of bijective functions from $X$ to $X$ under the operation of composition, which consists of \textbf{permutations} the $n$ objects.
\end{definition}
The permutations of $S_n$ can be expressed in terms of cycles. For instance, considering the set $X={1,2,3,4,5,6}$, the permutation $\pi = (124)(35)(6)$ tells us that $\pi$ maps 1 to 2, 2 to 4, 3 to 5, and 6 to itself. In this case, $\pi$ consists of 3 disjoint cycles.

\begin{definition}
A \textbf{k-cycle}, or \textbf{cycle of length k}, is a cycle containing k elements.
\end{definition}

\noindent Looking back at the example considered above, $\pi = (124)(35)(6)$ contains a 3-cycle, a 2-cycle, and a 1-cycle. \\

In group theory, the elements of any group may be partitioned into \textbf{conjugacy classes}.

\begin{definition}
In any group G, the elements g and h are \textbf{conjugates} if

\begin{eqnarray*}
g=khk^{-1}
\end{eqnarray*}
for $k \in G$. The set of all elements conjugate to a given g is called the \textbf{conjugacy
class} of g. 
\end{definition}

\noindent Hence, when $S_n$ acts on a set $X$, the cycle decomposition of each $\pi \in S_n$ as product of disjoint cycles is associated to the partitions of the objects in the set. For example, if one considers $S_4$, the partitions of 4 and the corresponding conjugacy classes are 

\begin{subequations}
\begin{align}
(1,1,1,1) &\rightarrow   \left\{(e)\right\} \\
(2,1,1) &\rightarrow \left\{(12), (13), (14), (23), (24), (34) \right\} \\
(2,2) &\rightarrow \left\{ (12)(34), (13)(24), (14)(23)  \right\} \\
(3,1) &\rightarrow \left\{  (123),(132),(124),(142),(134),(143),(234),(243) \right\} \\
(4) &\rightarrow \left\{ (1234),(1432),(1423),(1324),(1342),(1243)  \right\}
\end{align}
\end{subequations}

To keep track of the cycle decomposition of the elements of $S_4$, one can use the cycle index polynomial. Representing each object of the set by a coordinate, $i.e$ 1 by $g_1$, 2 by $g_2$, 3 by $g_3$ and 4 by $g_4$, the cycle index of $S_4$ reads

\begin{eqnarray}
Z(S_4) = \frac{1}{24!} \left( g_1^4 + 6 g_1^2 g_2 + 3 g_2^2 + 8 g_1 g_3 + 6 g^4  \right).
\end{eqnarray}

\noindent The coefficients before the monomials (the products of coordinates) count the number of elements in a given conjugacy class, the powers on the monomials indicate the number of times the object appears in a given partition, and the denominator $24!$ is the order of $S_4$, $i.e$ the total number of elements in $S_4$. As such, the cycle index is simply the average of the number of elements in $X$, that are left invariant by the action of $\pi \in S_4$.

\clearpage 

\section{Some properties of Schur polynomials and Young diagrams}
\label{Properties of Schur polynomials}
We show how to obtain the Schur polynomials in $S_2$ and $S_3$. We start by defining the Schur polynomial as 

\begin{eqnarray}
\label{schur polynomials definition in appendix}
\chi_R(X) =  \frac{1}{n!} \sum_{\pi \in S_n} \chi_R(\pi) \cdot tr \left( \pi X^{\otimes n} \right) = \frac{1}{n!} \sum_{\pi \in S_n}  \chi_R(\pi) \cdot X^{i_1}_{i_{\pi(1)}} X^{i_2}_{i_{\pi(2)}} \cdots X^{i_n}_{i_{\pi(n)}}. 
\end{eqnarray}

 The label $R$ is a Young diagram of $n$ boxes, in one-to-one correspondence with irreducible representations of the symmetric group $S_n$, indicating that Schur polynomials have the property of being associated with a particular irreducible representation of $S_n$. The factor $\chi_R(\pi)$ is the character of $\pi \in S_n$ in the irreducible representation $R$, or in other words the trace of the associated matrix representing $\pi$ in the irreducible representation $R$. $\pi (i)$ represents the integer $i$ that is permuted under the action of the permutation $\pi$. \medskip \\
\indent All group elements with a particular cycle structure belongs to the same conjugacy class, and have the same character for a given irreducible representation. Furthermore, all multi-trace factors are equal for group elements belonging to a particular conjugacy class. Therefore, all permutations of the same cycle structure give the same multi-trace factor. \\ For example, if we consider $n=4$, and take $\pi= (12)(3)(4)$, then we obtain

\begin{eqnarray}
tr \left( (12)(3)(4) \right) = X^{i_1}_{i_2} X^{i_2}_{i_1} X^{i_3}_{i_3} X^{i_4}_{i_4} = tr \left( X^2 \right) \cdot tr (X)^2.
\end{eqnarray}

\subsection{Schur polynomials in $S_2$}
The character table for $S_2$ is 

\begin{center}
\renewcommand{\arraystretch}{1.5}
\begin{tabular}{ |c|c|c c c| } 
 \hline
 Representation & Partition & & Class & \\ 
  & & (1$^2$) & & (2) \\ \hline 
$\Yboxdim{8pt} \yng(2)$ & (2) & 1 & & 1\\ \hline
$\Yboxdim{8pt} \yng(1,1)$ & (1,1) & 1 & & -1\\ [2ex] \hline
\end{tabular}
\end{center}

\noindent $R= \Yboxdim{4pt} \yng(2)$ is known as the symmetric representation, $R= \Yboxdim{4pt} \yng(1,1)$ as the antisymmetric representation. The construction of $\chi_{\Yboxdim{4pt} \yng(2)}$ and $\chi_{\Yboxdim{4pt} \yng(1,1)}$ is done using the character table and Eq. (\ref{schur polynomials definition in appendix}).

\begin{subequations}
\begin{align}
\chi_{\Yboxdim{4pt} \yng(2)}(X) &= \frac{1}{2!} \sum_{\pi \in S_2}  \chi_R(\pi) \cdot 
X^{i_1}_{i_{\pi(1)}} X^{i_2}_{i_{\pi(2)}} \\ 
&= \frac{1}{2!} \left( \chi_R (1) X^{i_1}_{i_1} X^{i_2}_{i_2} + \chi_R (12) X^{i_1}_{i_2} X^{i_2}_{i_1}  \right) \\
&= \frac{1}{2!} \left( (trX)^2 + tr(X^2)  \right).
\end{align}
\end{subequations}

\noindent Similarly

\begin{subequations}
\begin{align}
\chi_{\Yboxdim{4pt} \yng(1,1)}(X) &= \frac{1}{2!} \sum_{\pi \in S_2}  \chi_R(\pi) \cdot 
X^{i_1}_{i_{\pi(1)}} X^{i_2}_{i_{\pi(2)}} \\ 
&= \frac{1}{2!} \left( \chi_R (1) X^{i_1}_{i_1} X^{i_2}_{i_2} + \chi_R (12) X^{i_1}_{i_2} X^{i_2}_{i_1}  \right) \\
&= \frac{1}{2!} \left( (trX)^2 - tr(X^2)  \right).
\end{align}
\end{subequations}

\subsection{Schur polynomials in $S_3$}
The character table for $S_3$ is 

\begin{center}
\renewcommand{\arraystretch}{1.5}
\begin{tabular}{ |c|c|c c c| } 
 \hline
 Representation & Partition & & Class & \\ 
  & & (1$^3$) & (12) & (3) \\ \hline 
 $\Yboxdim{8pt} \yng(3)$ & (3) & 1 & 1 & 1\\ \hline
 $\Yboxdim{8pt} \yng(2,1)$ & (2,1) & 2 & 0 & -1\\ [2ex] \hline
 $\Yboxdim{8pt} \yng(1,1,1)$  & (1,1,1) & 1 & -1 & 1 \\ [3.5ex] \hline 
\end{tabular}
\end{center}

\noindent Here, $R= \Yboxdim{4pt} \yng(3)$ is the symmetric representation, $R= \Yboxdim{4pt} \yng(1,1,1)$ the antisymmetric representation, and $R= \Yboxdim{4pt} \yng(2,1)$ the mixed representation. Again, using the character table and Eq. (\ref{schur polynomials definition in appendix}), we obtain the construction of $\chi_{\Yboxdim{4pt} \yng(3)}$ as follows

\begin{subequations}
\begin{align}
\chi_{\Yboxdim{4pt} \yng(3)}(X) &= \frac{1}{3!} \sum_{\pi \in S_3}  \chi_R(\pi) \cdot 
X^{i_1}_{i_{\pi(1)}} X^{i_2}_{i_{\pi(2)}} X^{i_3}_{i_{\pi(3)}} \\ 
&= \frac{1}{3!} \left( \chi_R (1) X^{i_1}_{i_1} X^{i_2}_{i_2} X^{i_3}_{i_3} + \chi_R (12) X^{i_1}_{i_2} X^{i_2}_{i_1} X^{i_3}_{i_3} +  \chi_R (13) X^{i_1}_{i_3} X^{i_2}_{i_2} X^{i_3}_{i_1} \right. \\ 
&+ \left. \chi_R (23) X^{i_1}_{i_1} X^{i_2}_{3} X^{i_3}_{i_2} + \chi_R (123) X^{i_1}_{i_2} X^{i_2}_{i_3} X^{i_3}_{i_1} + \chi_R (132) X^{i_1}_{i_3} X^{i_2}_{i_1} X^{i_3}_{i_2} \right) \\
&= \frac{1}{3!} \left( \chi_R (1) (tr X)^3 + 3 \chi_R (12) (tr X)(tr X^2) + 2 \chi_R (123)  (tr X^3) \right) \\
&= \frac{1}{3!} \left( (tr X)^3 + 3 (tr X)(tr X^2) + 2 (tr X^3)  \right).               \end{align}
\end{subequations}

\noindent Similarly, 

\begin{subequations}
\begin{align}
 \chi_{\Yboxdim{4pt} \yng(2,1)} &= \frac{1}{3!} \left( \chi_R (1) (tr X)^3 + 3 \chi_R (12) (tr X)(tr X^2) + 2 \chi_R (123)  (tr X^3) \right) \\
&= \frac{1}{3!} \left( 2(tr X)^3 - 2 (tr X^3)  \right),     
\end{align}
\end{subequations}

\noindent and

\begin{subequations}
\begin{align}
\chi_{\Yboxdim{4pt} \yng(1,1,1)} &= \frac{1}{3!} \left( \chi_R (1) (tr X)^3 + 3 \chi_R (12) (tr X)(tr X^2) + 2 \chi_R (123)  (tr X^3) \right) \\
&= \frac{1}{3!} \left( (tr X)^3 - 3 (tr X)(tr X^2) + 2 (tr X^3)  \right).     
\end{align}
\end{subequations}

\clearpage

\section{Invariant theory}
\label{Invariant theory}
Invariant theory appears in the description of moduli spaces whose points parametrize spaces of interest, and is also useful in the construction of Hilbert schemes with associated Hilbert polynomials. In this appendix, we review some basic concepts of invariant theory (a good reference on this topic can be found in \cite{GBBIB3511}).

\subsection{Ring of polynomials}

\begin{definition}
Let $V$ be a complex vector space, and denote the dual vector space by V*= $\left\{ f : V \rightarrow \mathbb{C} \right\}$. The coordinate ring $\mathcal{R}(V)$ of V is the algebra of functions F:  $V \rightarrow \mathbb{C}$ generated by the elements of V*. The elements of $\mathcal{R}(V)$ are called polynomial functions on V.
\end{definition}

For a fixed basis $e_1, e_2, \ldots, e_n$ of $V$, a dual basis of $V*$ can be expressed by the coordinates $x_1,x_2, \ldots, x_n$ such that $x_i( c_1 e_1 + \cdots + c_n e_n) = c_i$. The coordinate ring $\mathcal{R}(V)$ obtained is $\mathbb{C}\left[ x_1,x_2, \ldots, x_n \right]$, the ring of 
polynomials in $n$ variables $f(x_1,x_2, \cdots, x_n)$ with complex coefficients.

\subsection{Invariant rings of the symmetric group}
The fundamental question at the heart of invariant theory is to ask whether the orbits of a group $G$ that acts on a space $V$ can form a space in their own right. In what follows we will consider the case where $G=S_n$.

Let the symmetry group $S_n$ act on the $n$-dimensional complex vector space $V$. The action of $S_n$ on $V$ translates into an action of $S_n$ on the polynomial ring $\mathbb{C}[x_1, \ldots, x_n] := \mathbb{C}[\bm{x}]$. The objective is then to describe the subring of invariant polynomials, which in the present case is denoted $\mathbb{C}[{\bm{x}}]^{S_n}$. According to a theorem of Hilbert, $\mathbb{C}[{\bm{x}}]^{S_n}$ is finitely generated as an algebra. This means that there exist invariants $I_1, \ldots, I_n \in \mathbb{C}[\bf{x}]$ such that $\mathbb{C}[{\bm{x}}]^{S_n}$ consists exactly of polynomials of the invariant ring $\mathbb{C}[{\bm{g}}]= \mathbb{C}[I_1, \ldots, I_n]$. 

We can summarize the results of section \ref{Symmetry and the cycle index} in the following way. The polynomials invariant under action of $S_n$ are precisely the Bell polynomials $Y$ with coordinates ${\bm{g}}=\left( g_1, \ldots, g_n \right)$. In particular, $Y({\bm{g}}) \in \mathbb{C}[{\bm{x}}]^{S_n}$  are uniquely written as polynomial in the $g_1, \ldots, g_n$ such that we have the isomorphism

\begin{eqnarray}
\mathbb{C}[\bm{x}]^{S_n} \sim \mathbb{C}[\bm{g}].
\end{eqnarray}

\subsection{Counting the number of invariants}
In this subappendix, we are interested in counting the polynomials that remain invariant under the action of the symmetric group. The treatment of this enumerative problem can be made systematic by keeping track of the degrees in which these invariants occur.

Let $\mathbb{C}[\bm{x}]^{S_n}_d$ be the set of all homogeneous invariants of degree $d$. The invariant ring $\mathbb{C}[\bm{x}]^{S_n} = \bigoplus_{d=0}^{\infty} \mathbb{C}[\bm{x}]^{S_n}_d$ is the direct sum of the finite dimensional $\mathbb{C}$-vector spaces $\mathbb{C}[\bm{x}]^{S_n}_d$. The Hilbert series (or Poincar\'{e} series) of the graded algebra $\mathbb{C}[\bm{x}]^{S_n}$ is the formal power series in $t$ defined by

\begin{eqnarray}
\label{Hilbert series appendix}
H \left( \mathbb{C}[\bm{x}]^{S_n},t \right) = \sum_{d=0}^{\infty} \mbox{dim} \left( \mathbb{C}[\bm{x}]^{S_n}_d \right) t^d,
\end{eqnarray}
which encodes in a convenient way the dimensions of the $\mathbb{C}[\bm{x}]^{S_n}_d$-vector space of degree $d$.

In 1897, Molien proved that for any group finite group $G$ acting on $\mathbb{C}[\bm{x}]^{G}$, it is possible to compute $H \left( \mathbb{C}[\bm{x}]^{G},t \right)$ without first computing $\mathbb{C}[\bm{x}]^{G}$. This is captured in the beautiful theorem below

\begin{theorem}
(Molien's theorem). Let $\rho: G \rightarrow \rm{GL(V)}$ be a representation of a finite group G of order $\left| G \right|$. If G acts on $\mathbb{C}[V]= \mathbb{C}[\bm{x}]$, then the Hilbert series of the invariant ring $\mathbb{C}[\bm{x}]^{G}$ can be expressed as 

\begin{eqnarray}
H \left( \mathbb{C}[\bm{x}]^{G},t \right) = \frac{1}{|G|} \sum_{g \in G} \frac{1}{\det \left( I - \rho(g) t \right)}.
\end{eqnarray}
\end{theorem}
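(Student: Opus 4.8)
The plan is to compute the Hilbert series degree by degree, using the standard fact that for any finite-dimensional representation $W$ of $G$ the dimension of the invariant subspace $W^G$ equals the averaged trace $\frac{1}{|G|}\sum_{g\in G}\operatorname{tr}(g|_W)$. First I would introduce the Reynolds operator $P=\frac{1}{|G|}\sum_{g\in G}\rho_W(g)$ and verify that it is idempotent with image exactly $W^G$; since the trace of a projection equals the dimension of its image, this yields $\dim W^G=\frac{1}{|G|}\sum_{g\in G}\operatorname{tr}(g|_W)$. This is the one genuinely representation-theoretic input and it rests only on $|G|$ being invertible in $\mathbb{C}$.

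Next I would apply this identity with $W=\mathbb{C}[\bm{x}]_d$, the space of homogeneous polynomials of degree $d$, which is precisely the $d$-th symmetric power $S^d(V^*)$ of the action on linear forms. This gives $\dim\big(\mathbb{C}[\bm{x}]^{G}_d\big)=\frac{1}{|G|}\sum_{g\in G}\operatorname{tr}\!\big(g|_{S^d(V^*)}\big)$. Substituting into the definition (\ref{Hilbert series appendix}) and interchanging the sums over $d$ and $g$ (legitimate for $|t|$ small, where all series converge absolutely) reduces the theorem to evaluating, for each fixed $g$, the inner generating function $\sum_{d\ge 0}\operatorname{tr}\!\big(g|_{S^d(V^*)}\big)\,t^d$.

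The core computation is this inner generating function, which I expect to be the main obstacle since it is where the multiplicative (symmetric-power) structure of the coordinate ring enters. Because $G$ is finite, $\rho(g)$ is diagonalizable with eigenvalues $\lambda_1,\dots,\lambda_n$ that are roots of unity; the contragredient action on $V^*$ then has eigenvalues $\lambda_1^{-1},\dots,\lambda_n^{-1}$, and $g$ acts on $S^d(V^*)$ with trace equal to the complete homogeneous symmetric polynomial $h_d(\lambda_1^{-1},\dots,\lambda_n^{-1})$. The classical identity $\sum_{d\ge0}h_d(\mu_1,\dots,\mu_n)\,t^d=\prod_{i=1}^{n}(1-\mu_i t)^{-1}$ then gives $\sum_d\operatorname{tr}(g|_{S^d(V^*)})\,t^d=\prod_i(1-\lambda_i^{-1}t)^{-1}=1/\det\!\big(I-\rho(g)^{-1}t\big)$. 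This is, in effect, the single-generator case of the bosonic plethystic generating function already used earlier in the text.

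Finally I would reconcile the apparent discrepancy between $\rho(g)$ and $\rho(g)^{-1}$: reindexing the group sum by the bijection $g\mapsto g^{-1}$ of $G$ shows $\frac{1}{|G|}\sum_{g}1/\det(I-\rho(g)^{-1}t)=\frac{1}{|G|}\sum_{g}1/\det(I-\rho(g)t)$. Assembling the three steps then produces $H(\mathbb{C}[\bm{x}]^{G},t)=\frac{1}{|G|}\sum_{g\in G}1/\det(I-\rho(g)t)$, as claimed.
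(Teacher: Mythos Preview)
Your argument is correct and is the standard proof of Molien's theorem: average the trace via the Reynolds projector to get $\dim W^G$, specialise to $W=S^d(V^*)$, sum over $d$ using the generating identity for complete homogeneous symmetric polynomials, and reindex $g\mapsto g^{-1}$ at the end. There is nothing to compare against, however, because the paper does not prove this theorem; it merely states it in the appendix and refers the reader to Sturmfels' \emph{Algorithms in Invariant Theory} for a proof. Your write-up is essentially the proof one finds in that reference.
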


\noindent We refer the reader to \cite{GBBIB3511} for a very readable proof. In the case of the symmetric group, one simply writes the Hilbert series (\ref{Hilbert series appendix}) as 

\begin{eqnarray}
H \left( \mathbb{C}[\bm{x}]^{S_n},t \right) = \frac{1}{|S_n|} \sum_{g \in S_n} \frac{1}{\det \left( I - \rho(g) t \right)}.
\end{eqnarray}

\subsection{Rings of differential operator}
As the algebra of differential operators on affine $n$-spaces, the \textit{Weyl algebra} is perhaps the most important ring of differential operator. It is denoted 

\begin{eqnarray}
D\left( \mathbb{C}[x_1, \ldots, x_n] \right) = \mathbb{C}<x_1, \ldots, x_n, \partial_{x_1}, \ldots, \partial_{x_n}>,
\end{eqnarray}
where the variables $x_i$ commute with each other, the variables $\partial_{x_j} = \frac{\partial}{\partial x_j}$ commute with each other, and the two sets of variables interact via the product rule $\partial_j x_i = x_i \partial_j + \delta_{ij}$ \cite{traves2006differential}.
\clearpage 
\section{Determinantal form of Bell polynomials} 
\label{appendix determinantal form}

We compute the determinantal form of Bell polynomials at levels 2, 3 and 4, and show from these expressions how the recurrence relation in Eq. (\ref{recurrence relation}) can be seen. \\ \\ \noindent At level 2

\begin{eqnarray}
Y_2 \left( g_1,g_2 \right) = \begin{vmatrix} g_1 & g_2 \\ -1 & g_1 \end{vmatrix} = g_1^2 + g_2 =g_1 Y_1 + g_2 Y_0.
\end{eqnarray}

\noindent At level 3

\begin{subequations}
\begin{align}
Y_3 \left( g_1,g_2,g_3 \right) &= \begin{vmatrix} g_1 & 2g_2 & g_3\\ -1 & g_1 & g_2 \\ 0 & -1 & g_1 \end{vmatrix}\\ &= g_1 \begin{vmatrix} g_1 & g_2 \\ -1 & g_1 \end{vmatrix} - 2g_2 \begin{vmatrix} -1 & g_2 \\ 0 & g_1 \end{vmatrix} + g_3 \begin{vmatrix} -1 & g_1 \\ 0 & -1 \end{vmatrix}  \\ &= g_1 \left( g_1^2 + g_2 \right) + 2 g_2 \left( g_1 \right) + g_3 (1) \\ &= g_1 Y_2 + 2 g_2 Y_1 + g_2 Y_0.
\end{align}
\end{subequations}

\noindent At level 4

\begin{subequations}
\begin{align}
Y_4 \left( g_1,g_2,g_3, g_4 \right) &=    \begin{vmatrix} g_1 & 3g_2 & 3g_3 & g_4 \\ -1 & g_1 & 2 g_2 & g_3 \\ 0 & -1 & g_1 & g_2 \\ 0 & 0 & -1 & g_1 \end{vmatrix} \\ 
&= g_1 \begin{vmatrix} g_1 & 2g_2 & g_3 \\ -1 & g_1 & g_2 \\ 0 & -1 & g_1 \end{vmatrix} + \begin{vmatrix} 3g_2 & 3 g_3 & g_4 \\ -1 & g_1 & g_2 \\ 0 & -1 & g_1 \end{vmatrix} \\ 
&= g_1 \left[ g_1 \left( g_1^2 + g_2 \right) + \left( 2 g_2 g_1 + g_3 \right) \right] + \left[ 3g_2 \left( g_1^2 + g_2 \right) + 3 g_3g_1 + g_4 \right] \\ &= g_1 \left( g_1^3 + 3g_1 g_2 + g^3 \right) + 3 g_2 \left( g_1^2 + g_2 \right) + 3 g_3 \left( g_1 \right) + g_4 (1) \\ 
&= g_1 Y_3 + 3 g_2 Y_2 + 3 g_3 Y_1 + g_4 Y_0.
\end{align}
\end{subequations}

In what follows, we show from the above determinantal forms that the transformation property in Eq. (\ref{transformation of Bell polynomial}) holds at levels 2,3 and 4. \\ \\ \noindent
 At level 2

\begin{subequations}
\begin{align}
\tilde{Y}_2&= \tilde{g}_1 \tilde{Y}_1 + \tilde{g}_2 \tilde{Y}_0  \\ &= \left[ \left(q \bar{q} \right) g_1 \right]  \left[ \left( q \bar{q} \right) Y_1 \right] + \left[ \left( q \bar{q} \right)^2 g_2 \right] \left[ Y_0 \right] \\ &= \left( q \bar{q} \right)^2 \left[g_1 Y_1 + g_2 Y_0 \right] \\ &= \left( q \bar{q} \right)^2 Y_2.
\end{align}
\end{subequations}

\noindent At level 3

\begin{subequations}
\begin{align}
\tilde{Y}_3 &= \tilde{g}_1 \tilde{Y}_2 + 2 \tilde{g}_2 Y_1 + \tilde{g}_3 \tilde{Y}_0 \\ &= \left[ \left( q \bar{q} \right) g_1 \right] \left[ \left( q \bar{q} \right)^2 Y_2 \right] + 2 \left[  \left( q \bar{q} \right)^2 g_2 \right] \left[ \left( q \bar{q} \right) Y_1 \right] + \left[  \left( q \bar{q} \right)^3 g_3 \right] \left[ Y_0 \right] \\ &= \left( q \bar{q} \right)^3 \left[ g_1 Y_2 + 2 g_2 Y_1 + g_3 Y_0 \right] \\ &= \left( q \bar{q} \right)^3 Y_3.
\end{align}
\end{subequations}

\noindent At level 4 

\begin{subequations}
\begin{align}
\tilde{Y}_4 &= \tilde{g}_1 \tilde{Y}_3 + 3 \tilde{g}_2 \tilde{Y}_2 + 3 \tilde{g}_3 \tilde{Y}_1 + \tilde{g}_4 \tilde{Y}_0 \\ &= \left[ \left( q \bar{q} \right) g_1 \right] \left[ \left( q \bar{q} \right)^3 Y_3 \right] + 3 \left[ \left( q \bar{q} \right)^2 g_2 \right] \left[ \left( q \bar{q} \right)^2 Y_2 \right] + 3 \left[ \left( q \bar{q} \right)^3 g_3 \right] \left[ \left( q \bar{q} \right) Y_1 \right] + \left[ \left( q \bar{q} \right)^4 g_4 \right] \left[ Y_0 \right] \\ &= \left( q \bar{q} \right)^4 \left[ g_1 Y_3 + 3 g_2 Y_2 + 3 g_3 Y_1 + g_4 Y_0 \right] \\
&= \left( q \bar{q} \right)^4 Y_4.
\end{align}
\end{subequations}
\clearpage 
\section{Palindromic numerators of $Z_n \left( q,\bar{q}; \mathbb{C}^2 \right)$}
\label{appendix palindromic}

We give expressions of $Z_n \left( q,\bar{q}; \mathbb{C}^2 \right)$ up to $n=4$ showing that the numerators are palindromic.

\begin{subequations}
\begin{align}
Z_2 \left( q,\bar{q}; \mathbb{C}^2 \right) &= \frac{1}{2!} \left[ \frac{1}{\left| 1-q \right|^2} + \frac{1}{\left|1-q^2 \right|}    \right] \\ &= \frac{1}{2!} \left[  \frac{1}{\left| 1-q \right| \left| 1-q \right|} + \frac{1}{\left| 1-q \right| \left| 1+q \right|} \right] \\ &= \frac{1}{2!} \left[ \frac{\left| 1-q \right| + \left| 1+q \right|}{\left| 1-q \right| \left| 1-q \right| \left| 1+q \right|}  \right] \\ &= \frac{1}{2!} \left[ \frac{2 + 2 q \bar{q}}{\left| 1-q \right| \left| 1-q^2 \right|}  \right] \\ &= \frac{1+ q \bar{q}}{\prod\limits_{i=1}^2 \left( 1-q^i \right) \left( 1-\bar{q}^i \right)}.
\end{align}
\end{subequations}

\begin{subequations}
\begin{align}
Z_3 \left( q,\bar{q}; \mathbb{C}^2 \right) &= \frac{1}{3!} \left[ \frac{1}{\left| 1-q \right|^3} + 3 \frac{1}{\left| 1-q \right|} \frac{1}{\left|1-q^2 \right|} + 2 \frac{1}{\left| 1-q^3 \right|}  \right] \\ &= \frac{1}{3!} \left[ \frac{\left| 1-q \right| \left| 1-q^2 \right|\left| 1-q^3 \right|+ 3 \left| 1-q \right|^2 \left| 1-q^3 \right| + 2 \left| 1-q \right|^3 \left| 1-q^2 \right|}{\left| 1-q \right|^3 \left| 1-q^2 \right| \left| 1-q^3 \right|} \right] \\ &= \frac{1}{3!} \left[ \frac{\left| 1+q \right| \left| 1-q^3 \right| + 3 \left| 1-q^3 \right| + 2 \left| 1-q \right|^2 \left| 1+q \right|}{\left| 1-q \right| \left| 1-q^2 \right| \left| 1-q^3 \right|} \right] \\ &= \frac{1}{3!} \left[ \frac{6 + 6q^1\bar{q}^1 + 6q^2\bar{q}^1 + 6 q^1 \bar{q}^2 + 6 q^2 \bar{q}^2 + 6 q^3 \bar{q}^3}{\left| 1-q \right| \left| 1-q^2 \right| \left| 1-q^3 \right|} \right] \\ &=  \frac{1 + q^1\bar{q}^1 + q^2\bar{q}^1 +  q^1 \bar{q}^2 +  q^2 \bar{q}^2 +  q^3 \bar{q}^3}{\prod\limits_{i=1}^3 \left( 1-q^i \right) \left( 1-\bar{q}^i \right)}.
\end{align}
\end{subequations}

The palindromic aspect of the numerator can be seen by rewriting $Z_2 \left( q,\bar{q}; \mathbb{C}^2 \right)$ and $Z_3 \left( q,\bar{q}; \mathbb{C}^2 \right)$ in the following way

\begin{multicols}{2}
\begin{minipage}{0.5 \textwidth} \vspace{0.82cm}
$Z_2 \left( q,\bar{q}; \mathbb{C}^2 \right)= \frac{\begin{tabular}{ccc}
 & \textcolor{red}{1} $q^1 \bar{q}^1$ &  \\ 
\textcolor{red}{0} $q^1 \bar{q}^0$ & & \textcolor{red}{0} $q^0 \bar{q}^1$\\  
  & \textcolor{red}{1} $q^0 \bar{q}^0$ &     
\end{tabular}}{\prod\limits_{1=1}^{2} \left( 1-q^i \right) \left( 1-\bar{q}^i \right)}$,
\end{minipage}
\begin{minipage}{0.5 \textwidth}
$Z_3 \left( q,\bar{q}; \mathbb{C}^2 \right)= \frac{\begin{tabular}{ccccc}
& & \textcolor{red}{1} $q^3 \bar{q}^3$ & & \\ 
 &  & \textcolor{red}{1} $q^2 \bar{q}^2$ &  &\\  
 & \textcolor{red}{1} $q^2 \bar{q}^1$ &  & \textcolor{red}{1} $q^1 \bar{q}^2$  &  \\
&  & \textcolor{red}{1} $q^1 \bar{q}^1$ &  &  \\ 
&  & \textcolor{red}{1} $q^0 \bar{q}^0$ &   &  
\end{tabular}}{\prod\limits_{k=1}^{3} \left| 1-q^k  \right|}$.
\end{minipage}
\end{multicols}

\noindent Then, focusing on the coefficients, we have

\begin{multicols}{2}
\begin{minipage}{0.5 \textwidth} \vspace{0.82cm}
$Z_2 \left( q,\bar{q}; \mathbb{C}^2 \right)= \frac{\begin{tabular}{ c c c }
 & \textcolor{red}{1} &  \\ 
\textcolor{red}{0} & & \textcolor{red}{0} \\  
  & \textcolor{red}{1}  &     
\end{tabular}}{\prod\limits_{k=1}^{2} \left| 1-q^k  \right|}$,
\end{minipage}
\begin{minipage}{0.5 \textwidth}
$Z_3 \left( q,\bar{q}; \mathbb{C}^2 \right)=  \frac{\begin{tabular}{ccccc}
& & \textcolor{red}{1}  & & \\ 
 & \textcolor{blue}{0} & \textcolor{red}{1}  & \textcolor{blue}{0} &\\  
\textcolor{blue}{0} & \textcolor{red}{1}  & \textcolor{blue}{0} & \textcolor{red}{1}   & \textcolor{blue}{0} \\
& \textcolor{blue}{0} & \textcolor{red}{1}  & \textcolor{blue}{0} &  \\ 
&  & \textcolor{red}{1}  &   &  
\end{tabular}}{\prod\limits_{k=1}^{3} \left| 1-q^k  \right|}$,
\end{minipage}
\end{multicols}
 
\noindent and we see the symmetry corresponding to a point reflection through the center. Finally, at level $n=4$, we have 

\begin{center}
$Z_4 \left( q,\bar{q}; \mathbb{C}^2 \right)= \frac{\begin{tabular}{ccccccccccc}
&&&&& \textcolor{red}{1} $q^6 \bar{q}^6$ &&&&&\\ 
&&&&& \textcolor{red}{1} $q^5 \bar{q}^5$ &&&&&\\  
&&&&\textcolor{red}{1} $q^5 \bar{q}^4$&&\textcolor{red}{1} $q^4 \bar{q}^5$&&&&  \\
&&& \textcolor{red}{1} $q^5 \bar{q}^3$ && \textcolor{red}{2} $q^4 \bar{q}^4$ && \textcolor{red}{1} $q^3 \bar{q}^5$ &&&  \\
&&&& \textcolor{red}{1} $q^4 \bar{q}^3$ && \textcolor{red}{1} $q^3 \bar{q}^4$ &&&&  \\
&&& \textcolor{red}{1} $q^4 \bar{q}^2$ && \textcolor{red}{2} $q^3 \bar{q}^3$ && \textcolor{red}{1} $q^2 \bar{q}^4$ &&&   \\
&&&& \textcolor{red}{1} $q^3 \bar{q}^2$ && \textcolor{red}{1} $q^2 \bar{q}^3$ &&&&  \\
&&& \textcolor{red}{1} $q^3 \bar{q}^1$ && \textcolor{red}{2} $q^2 \bar{q}^2$ && \textcolor{red}{1} $q^1 \bar{q}^3$ &&&  \\
&&&&\textcolor{red}{1} $q^2 \bar{q}^1$&&\textcolor{red}{1} $q^1 \bar{q}^2$&&&&  \\
&&&&& \textcolor{red}{1} $q^1 \bar{q}^1$ &&&&& \\ 
&&&&& \textcolor{red}{1} $q^0 \bar{q}^0$ &&&&& \\
\end{tabular}}{\prod\limits_{k=1}^{4} \left| 1-q^k  \right|}$,      
\end{center}

\noindent and taking a closer look at the coefficients by writing 

\begin{center}
$Z_4 \left( q,\bar{q}; \mathbb{C}^2 \right) = \frac{\begin{tabular}{ccccccccccc}
&&&&& \textcolor{red}{1}  &&&&&\\ 
&&&& \textcolor{blue}{0} & \textcolor{red}{1}  & \textcolor{blue}{0} &&&&\\  
&&&\textcolor{blue}{0}&\textcolor{red}{1} &\textcolor{blue}{0}&\textcolor{red}{1} &\textcolor{blue}{0}&&&  \\
&&\textcolor{blue}{0}& \textcolor{red}{1}  &\textcolor{blue}{0}& \textcolor{red}{2}  &\textcolor{blue}{0}& \textcolor{red}{1}  &\textcolor{blue}{0}&&  \\
&\textcolor{blue}{0}&\textcolor{blue}{0}&\textcolor{blue}{0}& \textcolor{red}{1}  &\textcolor{blue}{0}& \textcolor{red}{1}  &\textcolor{blue}{0}&\textcolor{blue}{0}&\textcolor{blue}{0} & \\
\textcolor{blue}{0}&\textcolor{blue}{0}&\textcolor{blue}{0}& \textcolor{red}{1}  &\textcolor{blue}{0}& \textcolor{red}{2}  &\textcolor{blue}{0}& \textcolor{red}{1}  &\textcolor{blue}{0}&  \textcolor{blue}{0}&\textcolor{blue}{0} \\
&\textcolor{blue}{0}&\textcolor{blue}{0}&\textcolor{blue}{0}& \textcolor{red}{1}  &\textcolor{blue}{0}& \textcolor{red}{1} &\textcolor{blue}{0}&\textcolor{blue}{0}&\textcolor{blue}{0} & \\
&&\textcolor{blue}{0}& \textcolor{red}{1}  &\textcolor{blue}{0}& \textcolor{red}{2}  &\textcolor{blue}{0}& \textcolor{red}{1}  &\textcolor{blue}{0}& & \\
&&&\textcolor{blue}{0}&\textcolor{red}{1} &\textcolor{blue}{0}&\textcolor{red}{1} &\textcolor{blue}{0}&&& \\
&&&&\textcolor{blue}{0}& \textcolor{red}{1}  &\textcolor{blue}{0}&&&& \\ 
&&&&& \textcolor{red}{1} &&&&& \\
\end{tabular}}{\prod\limits_{k=1}^{4} \left| 1-q^k  \right|}$,  
\end{center} 

\noindent allows us to see the point reflection symmetry through the center. 


\end{appendices}

\clearpage

\bibliography{main}

\providecommand{\href}[2]{#2}\begingroup\raggedright\endgroup


\end{document}